\newlength{\defbaselineskip}
\newcommand{\setlinespacing}[2]%
           {\setlength{\baselineskip}{#1 \defbaselineskip}}
\newtheorem{thm}{Theorem}[section]
\newtheorem{lem}[thm]{Lemma}
\newtheorem{definition}[thm]{Definition}
\newtheorem{fact}[thm]{Fact}
\newlength{\btw}
\newlength{\stw}
\title{Computing bounded-width tree and branch decompositions of $k$-outerplanar graphs}
\author{Ioannis Katsikarelis}
\date{}
\begin{document}

\maketitle

\begin{abstract}
 By a well known result \cite{bodlaender1988planar} the treewidth of $k$-outerplanar graphs is at most $3k-1$. This paper gives, besides a rigorous proof of this fact, an algorithmic implementation of the proof, i.e.\ it is shown that, given a $k$-outerplanar graph $G$, a tree decomposition of $G$ of width at most $3k-1$ can be found in $O(kn)$ time and space. Similarly, a branch decomposition of a $k$-outerplanar graph of width at most $2k+1$ can be also obtained in $O(kn)$ time, the algorithm for which is also analyzed.
\end{abstract}

\section{Introduction}
Tree decompositions are an important subject of algorithmic studies on graphs and the notion of treewidth has become one of the main parameters used in complexity characterizations of many problems related to them. Both notions were introduced by Robertson and Seymour \cite{Robertson198449}. A \emph{tree decomposition} of a graph is a structural representation of the graph using a tree and sets of vertices of the graph (sometimes called `bags'). These sets are associated with the nodes of the tree and all vertices of the graph are included in some sets. The size of the largest such set (minus one) is called the width of the tree decomposition. Informally, the \emph{treewidth} of a graph is a measure of how tree-like the graph is, based on some tree decomposition of it.

The main reason behind the interest in tree decompositions is that they enable ideas for solving problems that are tractable when the input graph is a tree, but intractable for the general case of input graphs (usually NP-hard), to be extended and applied in the general case as well. After a tree decomposition of small width is found for an input graph, some form of dynamic programming algorithm efficiently solves the problem on the tree decomposition, making use of its tree-like structure (see \cite{bodlaender1994tourist, bodlaender1997treewidth}). The problem of finding a tree decomposition of small width, however, was shown to be NP-hard \cite{Arnborg:1987:CFE:37170.37183} and in practice, dominates the running time of most algorithms based on tree decompositions.

Despite this intractability result, Bodlaender \cite{bodlaender1988planar} has shown that a tree decomposition of width at most $3k-1$ exists, if the input graph is $k$-outerplanar. Several authors have cited \cite{bodlaender1988planar} as also providing a linear time algorithm (of $O(kn)$ complexity) for computing a tree decomposition of such width, apart from the upper bound on the treewidth (some examples are included in \cite{Alber200426, Chen200120, Kammer:2007:DSK:1778580.1778615, KammerTholey2009}). In fact, due to its constructive nature, Bodlaender's proof (presented also in \cite{Bodlaender98apartial}) can be seen as implying the existence of such an algorithm, yet the computations involved do not allow for an obvious linear time performance, calling for a more detailed, in-depth investigation.

By a non-trivial application of the ideas presented in \cite{bodlaender1988planar, Bodlaender98apartial} a rigorous proof of the upper bound of at most $3k-1$ on the treewidth of $k$-outerplanar graphs is given here, as well as the presentation of a corresponding algorithm that finds in $O(kn)$ time and space a tree decomposition of a $k$-outerplanar graph of width at most $3k-1$. It should be noted also that the upper bound on the treewidth of $k$-outerplanar graphs has been shown to be tight, i.e.\ there exist $k$-outerplanar graphs with treewidth $3k-1$ \cite{KammerTholey2009}. 

\paragraph{} A closely related notion to the treewidth of a graph is that of its \emph{branchwidth}. It is based on a \emph{branch decomposition} of the graph, which bears similarities to -- and can also be used much akin to -- a tree decomposition. These two notions were also introduced by Robertson and Seymour \cite{Robertson1991153}. For some otherwise intractable problems, in the same way as with tree decompositions, first a branch decomposition of bounded width is found for a given graph and then some dynamic programming algorithm efficiently solves the problems based on the structure of the branch decomposition.

Instead of mapping groups of vertices of the input graph to all the nodes of a tree, a branch decomposition makes use of a bijective mapping between the edges of the graph and the leaf nodes of a tree. As is the case for treewidth, the problem of determining whether a given graph $G$ has branchwidth at most $b$ is NP-complete, when both $G$ and $b$ are considered parts of the input \cite{Seymour1994_rat}.

The ideas by Bodlaender \cite{bodlaender1988planar} mentioned above, can also be used for the design of an algorithm that computes a branch decomposition of a given $k$-outerplanar graph, of width at most $2k+1$, as introduced in \cite{hjortas2005}. This algorithm requires linear time and is also presented and analyzed here, mainly for completeness.

\paragraph{}This paper is structured as follows. Section 2 gives some definitions of notions used in the following sections, while Section 3 contains Bodlaender's proof of the upper bound on the treewidth of $k$-outerplanar graphs, which heavily influences the algorithms that follow. The algorithm that finds a bounded-width tree decomposition is presented in Section 4 and finally, Section 5 discusses matters relating to branchwidth.

\section{Definitions}
Definitions of concepts used in the following sections are given here, starting with the formal definitions of tree and branch decompositions.

\begin{definition}\label{def:treewidth}
 A \emph{tree decomposition} of a graph $G=(V,E)$ is a pair $(\{X_i|i\in I\},T=(I,F))$, with $\{X_i|i\in I\}$ a family of subsets of $V$, one for each node of $T$, and $T$ a tree such that:
 \begin{itemize}
  \item $\bigcup_{i\in I}X_i=V$.
  \item for all edges $(v,w)\in E$, there exists an $i\in I$ with $v\in X_i$ and $w\in X_i$.
  \item for all $i,j,k\in I$: if $j$ is on the path from $i$ to $k$ in $T$, then $X_i\cap X_k\subseteq X_j$.
 \end{itemize}
The \emph{width} of a tree decomposition $(\{X_i|i\in I\},T=(I,F))$ is $\max_{i\in I}|X_i|-1$. The \emph{treewidth} of a graph $G$ is the minimum width over all possible tree decompositions of $G$.
\end{definition}

\begin{definition}\label{def:branchwidth}
 A \emph{branch decomposition} of a graph $G=(V,E)$ is a pair $(T=(I,F),\sigma)$, where $T$ is a tree with every node in $T$ of degree one or three, and $\sigma$ is a bijection from $E$ to the set of leaf nodes in $T$. The \emph{order} of an edge $f\in F$ is the number of vertices $v\in V$, for which there exist adjacent edges $(v,w),(v,x)\in E$, such that the path in $T$ from $\sigma(v,w)$ to $\sigma(v,x)$ uses $f$. These vertices comprise what is called the \emph{middle set}, denoted by $\omega(f)$. The \emph{width} of branch decomposition $(T=(I,F),\sigma)$ is the maximum order over all edges $f\in F$. The \emph{branchwidth} of $G$ is the minimum width over all branch decompositions of $G$.
\end{definition}

The \emph{outerplanarity} index of a plane graph (an embedded planar graph) is defined as the maximum distance of any vertex of the graph to the outer face, by calling two vertices adjacent when they share a face, and the outer face adjacent to all vertices on the outer face. An alternative, equivalent definition of outerplanarity is the following \cite{Baker:1994:AAN:174644.174650}.

\begin{definition}\label{def:outerplanar}
 An embedding of a graph $G=(V,E)$ is \emph{$1$-outerplanar}, if it is planar and all vertices lie on the exterior face. For $k\ge2$, an embedding of a graph $G=(V,E)$ is \emph{$k$-outerplanar}, if it is planar and removal of all vertices on the outer face yields a $(k-1)$-outerplanar embedding of the resulting graph. A graph is \emph{$k$-outerplanar}, if it has a $k$-outerplanar embedding.
\end{definition}

The notion of a \emph{minor} is used in the proof of the next section.

\begin{definition}\label{def:minor}
 A graph $G=(V,E)$ is a \emph{minor} of a graph $H=(W,F)$, if $G$ can be obtained from $H$ by a series of vertex deletions, edge deletions, and edge contractions, where an edge contraction is the operation that replaces two adjacent vertices $v$, $w$ by one that is adjacent to all vertices that were adjacent to $v$ or $w$.
\end{definition}

Finally, the notions of \emph{vertex} and \emph{edge remember number}  of maximal spanning forests\footnote{Meaning a spanning tree of every connected component.} of a graph are also used. They were introduced in \cite{bodlaender1988planar}.

\begin{definition}\label{def:vr_er}
 Let $T=(V,F)$ be a maximal spanning forest of a graph $G=(V,E)$. A fundamental cycle is associated with every edge $e=(v,w)\in E-F$, meaning the unique cycle that consists of $e$ and the simple path from $v$ to $w$ in $T$. The \emph{vertex remember number} of $G$, relative to $T$, denoted by $vr(G,T)$, is defined as the maximum over all $v\in V$ of the number of fundamental cycles that use $v$. Similarly, the \emph{edge remember number} of $G$, relative to $T$, denoted by $er(G,T)$, is defined as the maximum over all edges $e\in E$ of the number of fundamental cycles that use $e$.
\end{definition}

\section{Treewidth of $k$-outerplanar graphs}
In this section the complete proof of the upper bound is presented, which will be used as justification for the correctness of the algorithm described in the next section.  After some brief notes regarding the results, a series of intermediary lemmas is provided building towards Theorem \ref{thm:thm83_tw_upbound}.

Lemmas \ref{thm:lem78_tw2} and \ref{thm:lem16_tw_minor} are well known results. The exposition of the proof follows the same lines as that of Bodlaender \cite{bodlaender1988planar,Bodlaender98apartial}, with the proof of Lemma \ref{thm:lem82_degree} expanded for complete accuracy. Lemma \ref{thm:lem78_tw2} deals with the case where $k=1$, while the results following it generalize to $k\ge 2$.

\begin{lem}\label{thm:lem78_tw2}
 Every outerplanar graph $G=(V,E)$ has treewidth at most $2$.
\end{lem}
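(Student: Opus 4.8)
The plan is to prove the bound by induction on $|V|$, after first reducing to the case of \emph{maximal} outerplanar graphs. Observe that treewidth is monotone under taking subgraphs: any tree decomposition of a graph restricts to a tree decomposition of a subgraph (the three conditions of Definition \ref{def:treewidth} are only relaxed when vertices and edges are deleted), so the width cannot increase. Hence it suffices to bound the treewidth of a maximal outerplanar graph $G'$ obtained from $G$ by adding edges inside the outer face until no further edge can be inserted while preserving a $1$-outerplanar embedding. Since $G$ is a subgraph of such a $G'$ on the same vertex set, the treewidth of $G$ is at most that of $G'$, and it is enough to show the latter is at most $2$.

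The key structural fact I would invoke is that a maximal outerplanar graph on $n \ge 3$ vertices is a triangulation of a polygon, and therefore contains a vertex $v$ of degree exactly $2$ whose two neighbors $u,w$ are themselves adjacent (an ``ear'' of the triangulation). This adjacency of the neighbors is precisely what makes the induction close. The base cases $n \le 3$ are immediate: such a graph is a single edge or a triangle, each admitting a one-node tree decomposition of width at most $2$.

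For the inductive step I would delete an ear $v$. The graph $G'-v$ is again a maximal outerplanar graph, now on fewer vertices, so by the induction hypothesis it has a tree decomposition of width at most $2$. Because the edge $(u,w)$ is present in $G'-v$, the second condition of Definition \ref{def:treewidth} guarantees a node whose bag $X_i$ contains both $u$ and $w$; to this node I would attach a new node carrying the bag $\{v,u,w\}$. One then verifies the three conditions: $v$ is now covered, the edges $(v,u)$ and $(v,w)$ are both realized in the new bag, and the connectivity condition still holds, since $v$ occurs only in the new (leaf) bag while $u$ and $w$ retain the connectivity they enjoyed in the decomposition of $G'-v$. The new bag has size $3$, so the width remains at most $2$.

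The main obstacle, and the reason for the reduction to maximal graphs, lies exactly in the degree-$2$ step: if the neighbors $u,w$ of $v$ were not adjacent, no common bag containing both would be guaranteed, and the naive attachment would break the connectivity condition. Passing to a maximal outerplanar graph forces $u$ and $w$ to be adjacent and so removes this difficulty. An alternative route would be to combine the minor-monotonicity of treewidth (Lemma \ref{thm:lem16_tw_minor}) with the characterization of outerplanar graphs as those excluding $K_4$ and $K_{2,3}$ as minors, but establishing that characterization is more laborious than the direct induction sketched above, so I would favour the inductive argument.
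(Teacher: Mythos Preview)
Your argument is correct and follows the same inductive skeleton as the paper's proof: remove a vertex $v$ of degree at most $2$, apply induction, and attach a fresh leaf bag $\{v,u,w\}$ to a node whose bag already contains both neighbours $u,w$. The only difference lies in how you guarantee the existence of such a node. You pass first to a maximal outerplanar supergraph so that the ear's neighbours are automatically adjacent; the paper instead works directly with $G$, removes $v$, and \emph{adds} the edge $(u,w)$ to the smaller graph before invoking induction. The point is that $(V\setminus\{v\},\,(E\setminus\{(v,u),(v,w)\})\cup\{(u,w)\})$ is again outerplanar (the new edge can be drawn where $v$ used to sit), so the induction hypothesis applies and the added edge forces a bag containing both $u$ and $w$. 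This sidesteps your detour through triangulated polygons and the ear lemma; conversely, your route makes the ``neighbours are adjacent'' fact structural rather than something to engineer on the fly. Either way the induction closes with width at most $2$.
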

\begin{proof}
Since all outerplanar graphs have degeneracy at most 2 \cite{Lick1970}, there must exist at least one vertex $v$ with degree either equal to 1, or equal  to 2. Supposing that $degree(v)=2$, let $w$ and $x$ be the two neighboring vertices of $v$, i.e.\ $(v,w)\in E$, $(v,x)\in E$, $w\not= x$. The graph $G'=(V-\{v\},(E-\{(v,w),(v,x)\})\cup \{(w,x)\})$, obtained by removing $v$ and connecting $w$ and $x$, is an outerplanar graph. Assuming, with induction, that there exists a tree decomposition $(\{X_i|i\in I\},T=(I,F))$ of $G'$ with treewidth $\le2$, it can be seen that there must be some $i\in I$, with $w\in X_i\wedge x\in X_i$. Now let $i^*\notin I, I^*=I\cup\{i^*\}, X_{i^*}=\{v,w,x\}$ and $T^*=(I^*,F\cup\{(i,i^*)\})$, i.e.\ a new node $i^*$ is added adjacent to $i$, with bag $\{v,w,x\}$. It's easily verified that $(\{X_i|i\in I^*\},T^*)$ is a tree decomposition of $G$ with treewidth at most $2$.
\end{proof}

\begin{lem}\label{thm:lem82_degree}
 For every $k$-outerplanar graph $G=(V,E)$, there exists a $k$-outerplanar graph $H=(V',E')$, such that $G$ is a minor of $H$, and degree$(H)\le3$.
\end{lem}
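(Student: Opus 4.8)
The plan is to fix a $k$-outerplanar embedding of $G$ and to reduce the degree locally at each high-degree vertex by a \emph{vertex-splitting} gadget, taking $H$ to be the graph obtained from $G$ by replacing every vertex $v$ with $\deg(v)\ge 4$ by a path. Concretely, if the edges incident to $v$ are $e_1,\dots,e_d$ in the cyclic (rotation) order given by the embedding, I would introduce a path $v_1-v_2-\cdots-v_d$ drawn inside a small disk around the former position of $v$, attach $e_i$ to $v_i$, and let all the $e_i$ leave the path on the same side, in order. Then every internal path vertex has degree $3$ (two path edges and one original edge), the two endpoints have degree $2$, and untouched vertices keep degree $\le 3$; hence $\deg(H)\le 3$. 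Moreover $G$ is recovered from $H$ by contracting, for each split $v$, all edges of its path back to a single vertex, so $G$ is a minor of $H$ in the sense of Definition~\ref{def:minor}.

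Since each gadget is drawn in a small disk and respects the rotation at $v$, the result is again a plane graph, and the only real content is to show that this embedding is $k$-outerplanar (Definition~\ref{def:outerplanar}). I would measure this through the outerplanarity index, i.e.\ the distance $\ell(x)$ of a vertex $x$ to the outer face in the auxiliary graph in which two vertices are adjacent when they share a face and the outer face is adjacent to every vertex on it, so that $G$ being $k$-outerplanar means $\ell(x)\le k$ for all $x$. The crucial design decision is \emph{where} to break the cyclic order into a path: for a vertex $v$ with $\ell(v)=\ell$ there is, by definition of the distance, a face $g$ incident to $v$ that also meets a vertex $u$ with $\ell(u)=\ell-1$ (or $g$ is the outer face itself, when $\ell=1$). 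I would split the cyclic sequence of edges at the two edges bounding this corner $g$, so that in the drawing $g$ becomes the single region lying on the open side of the path and is therefore bordered by \emph{all} of $v_1,\dots,v_d$.

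With this choice every path vertex $v_i$ shares the face $g$ with (a copy of) $u$, so $\ell_H(v_i)\le \ell(u)+1=\ell$; and since each face of $G$ incident to a split vertex is merely subdivided by the inserted path and no two faces of $G$ are merged, the distance of any original vertex to the outer face can only stay the same or decrease. An induction over the split vertices then gives $\ell_H(x)\le \ell_G(x)\le k$ for every vertex of $H$, which is exactly $k$-outerplanarity. The step I expect to be the main obstacle is precisely this outerplanarity-preservation argument, and in particular the choice of gadget: the more obvious replacement of $v$ by a \emph{cycle} encloses a new bounded face that every cycle vertex borders, and an interior cycle vertex whose own corners are all deep can then be forced one layer deeper than $v$. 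This already fails for $k=1$, where a degree-$\ge 4$ vertex on the outer face of an outerplanar graph would acquire interior cycle vertices at depth $2$. Using a path split toward the outer face removes the enclosing face and is what keeps every new vertex at depth $\le k$. Secondary care is needed when a face appears as several distinct corners of $v$ (for instance at a cut vertex) and when verifying that subdividing faces never lengthens an auxiliary path, but these are routine once the gadget and the split rule are fixed.
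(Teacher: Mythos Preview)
Your proposal is correct and follows essentially the same approach as the paper: replace each high-degree vertex by a path and orient the path so that the incident face of smallest depth lies on the open side and hence borders every new path vertex. The only cosmetic differences are that the paper uses a path of $d-2$ vertices (distributing two original edges to each endpoint so that every new vertex has degree exactly~$3$) and phrases the outerplanarity-preservation argument in terms of face \emph{layers} rather than vertex levels, but the key idea---pick the shallowest incident face and keep it adjacent to all new vertices---is identical.
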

\begin{proof}
 To find the $k$-outerplanar graph $H$, one can replace every vertex of degree $d\ge4$ in $G$ by a path of $d-2$ vertices of degree $3$, in such a way that $H$ remains $k$-outerplanar and edge ordering in $H$ matches that in $G$. In the following, this transformation is called an \emph{expansion}. For the expansion of a vertex to maintain $k$-outerplanarity, the \emph{layers} of all faces of $G$ must stay the same in $H$. The layers of the faces of a planar graph are assigned as follows: First, set the layer of the outer face to 0. Then, for each face, set its layer to be 1 higher than the layer of its adjacent face with the lowest layer.
 
 An expansion does not maintain facial adjacency and thus, may lead to a face becoming `disconnected' from a crucial neighboring face, which might imply an increase in the outerplanarity index of $H$, relative to that of $G$. This can be easily avoided, however, if one notices two facts about expansion: First, before expansion is applied to a vertex, all related faces are adjacent to each other (sharing the vertex to be expanded) and so their layers can only be at most one level apart. Second, if the expansion is applied in such a way, that a face whose layer is of the smallest level is placed adjacent to all others, then face layering is maintained.
  
 It turns out it is always possible to place faces in this way, a fact illustrated by Figure \ref{fig:degree_construction}. Suppose that face $f_i$ is of lower, or equal layer to all others in $G$. Placing that face adjacent to all others in $H$, as seen in Figure \ref{fig:degree_construction}, and the rest in their corresponding order according to $f_i$, will ensure that face layering does not change, and so $H$ remains $k$-outerplanar.
 
 \begin{figure}[htbp]
 \centerline{\includegraphics[width=120mm]{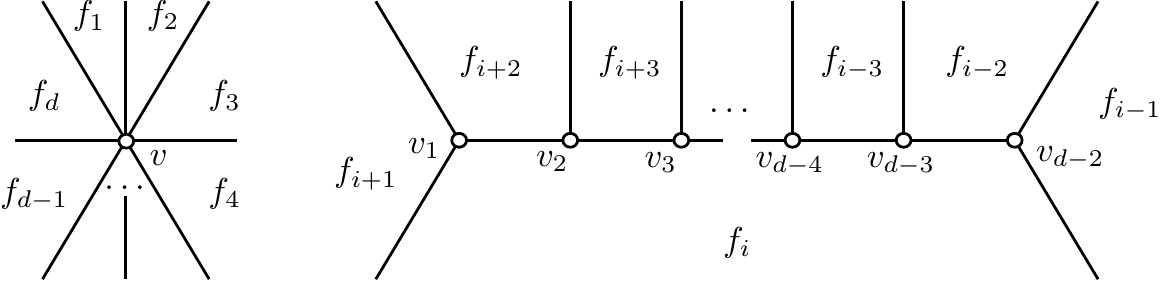}}
 \caption{The expansion described in the proof of Lemma \ref{thm:lem82_degree}. On the left is the vertex to be expanded, while on the right is the resulting construction. Note that indices on the faces on the right are considered modulo $d$.}
 \label{fig:degree_construction}
 \end{figure}
\end{proof}

\begin{lem}\label{thm:lem79_planar_maxspan}
 Let $G=(V,E)$ be a planar graph with some given planar embedding. Let $H=(V,E')$ be the graph, that is obtained from $G$ by removing all edges on the exterior face. Let $T'=(V,F')$ be a maximal spanning forest of $H$. Then there exists a maximal spanning forest $T=(V,F)$ of $G$, such that $er(G,T)\le er(H,T')+2$, and $vr(G,T)\le vr(H,T')+degree(G)$.
\end{lem}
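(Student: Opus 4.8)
The plan is to obtain $T$ by retaining all of $T'$ and adjoining to it a carefully chosen subset $S$ of the \emph{exterior edges} (those in $E\setminus E'$). Since $T'$ is a maximal spanning forest of $H$ it already spans every vertex of $V$; its only defect relative to $G$ is that two trees of $T'$ may lie in a common component of $G$, joined there only through exterior edges. Greedily adding exterior edges that merge distinct trees of $T'$, discarding any that would close a cycle, produces $S$ with $F=F'\cup S$ a maximal spanning forest of $G$ satisfying $F'\subseteq F$. The non-tree edges of $G$ relative to $T$ then split into the \emph{old} non-tree edges $E'\setminus F'$ and the \emph{leftover} exterior edges $(E\setminus E')\setminus S$.

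The first step is to check that the fundamental cycles of the old non-tree edges are unchanged. Any $g\in E'\setminus F'$ has both endpoints in one component of $H$, hence in a single tree of $T'$, so the path joining them lies wholly in $F'$ and uses none of the connecting edges of $S$; its fundamental cycle in $T$ thus equals its fundamental cycle in $T'$. Consequently, for a fixed edge of $T$ the old non-tree edges contribute to its remember count at most $er(H,T')$ (and nothing at all when the edge lies in $S$), and for a fixed vertex at most $vr(H,T')$, by the definitions in Definition~\ref{def:vr_er}.

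What remains is to bound the contribution of the leftover exterior edges: I would show that every edge of $T$ lies on the fundamental cycle of at most $2$ of them, and that every vertex lies on the fundamental cycle of at most degree$(G)$ of them; adding these to the previous paragraph yields the two claimed inequalities. For the edge bound, fix $f\in F$ and let $(A,B)$ be the partition of $V$ obtained by deleting $f$ from $T$; a leftover exterior edge uses $f$ in its fundamental cycle exactly when it crosses this cut. For the vertex bound, fix $v$ and let $T_1,\dots,T_d$ (with $d$ the degree of $v$ in $T$, so $d\le\ $degree$(G)$) be the subtrees obtained by deleting $v$; an exterior edge uses $v$ exactly when its endpoints fall in two different $T_i$, or one of them equals $v$.

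The heart of the argument, and the step I expect to be the main obstacle, is the planarity estimate that makes these crossings few. Here I would use that the exterior edges are precisely the edges of the boundary walk of the outer face, and that (after restricting to the relevant component) this walk bounds a topological disk $R$ inside which the forest $T$ is drawn. Deleting $f$ admits a simple arc through $f$ that meets $T$ only there and separates $R$ into the $A$-side and the $B$-side; as the endpoints of this arc lie on the boundary walk, that walk is severed into just two sub-walks, one carrying the $A$-vertices and one the $B$-vertices, so only the at most two boundary edges straddling the two cut points can cross the cut. The vertex bound is the same idea at $v$: the rotation system arranges $T_1,\dots,T_d$ into $d$ contiguous arcs of the boundary walk, and the number of boundary transitions between distinct arcs, charged together with the exterior edges incident to $v$, is at most the number of tree edges at $v$, hence at most degree$(G)$. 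The real work lies in making the disk and the separating arcs rigorous when $G$ is disconnected, when the outer face is not bounded by a simple cycle, and when a bridge is traversed twice by the boundary walk; the underlying combinatorial content is the laminar, non-crossing structure of the fundamental cycles of boundary edges, and it is exactly this structure that forces the modest $+2$ and $+\,$degree$(G)$ overheads.
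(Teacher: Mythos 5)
Your construction of $T$ (retain $T'$, greedily adjoin exterior edges until maximal) and your split of the non-tree edges into the old ones of $E'-F'$, whose fundamental cycles are unchanged and contribute at most $er(H,T')$ and $vr(H,T')$, plus the leftover exterior edges, is exactly the paper's argument. Where you diverge is the final counting step for the leftover exterior edges: you bound, for each tree edge $f$ and each vertex $v$, the number of such edges crossing the corresponding cut of $T$ via a Jordan-arc argument on the outer boundary walk, and you rightly flag the degenerate cases (bridges, non-simple boundary walks, disconnectedness) as the hard part. The paper sidesteps all of that by introducing the auxiliary plane graph $K=(V,(E-E')\cup F')$, of which $T$ is also a maximal spanning forest whose non-tree edges are precisely your leftover exterior edges, and observing that each of their fundamental cycles is the boundary of an interior face of $K$; the bounds $er(K,T)\le 2$ and $vr(K,T)\le degree(G)$ then drop out of the trivial facts that an edge lies on at most two interior faces and a vertex on at most $degree(G)$ of them, after which $er(G,T)\le er(H,T')+er(K,T)$ and $vr(G,T)\le vr(H,T')+vr(K,T)$ finish the proof. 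This face-incidence count in $K$ is the device your sketch is missing: it packages the laminar, non-crossing structure of the boundary-edge fundamental cycles that you describe into a single observation about faces, making the $+2$ and $+degree(G)$ overheads immediate rather than the outcome of a delicate cut-by-cut topological construction. Your route can be made to work, but the auxiliary graph $K$ is what turns it into a three-line count.
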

\begin{proof}
 Let $K$ be the graph with vertices in $G$ and edges in $T'$, or in $G$, but not in $H$, i.e.\ $K=(V,(E-E')\cup F')$. Also let $T=(V,F)$ be a maximal spanning forest of $K$, obtained by adding edges from $E-E'$ to $T'$, meaning that $T'\subseteq T$.
 
 Considering fundamental cycles in $K$, relative to $T$, it can be seen that each of these will form the boundary of an interior face in $K$. Also, since every edge is adjacent to at most two interior faces and each vertex adjacent to at most $degree(G)$ interior faces, it follows that $er(K,T)\le2$ and $vr(K,T)\le degree(G)$.
 
 Now, $T$ is a maximal spanning forest of $G$ as well, and every fundamental cycle in $G$, either is a fundamental cycle in $H$, or a fundamental cycle in $K$. This means that $er(G,T)\le er(K,T)+er(H,T')\le er(H,T')+2$, and also that $vr(G,T)\le vr(K,T)+vr(H,T')\le vr(H,T')+degree(G)$.
\end{proof}

\begin{lem}\label{thm:lem80_base_outerplanar_maxspan}
 Let $G=(V,E)$ be an outerplanar graph with $degree(G)\le3$. Then there exists a maximal spanning forest $T=(V,F)$, with $er(G,T)\le2$ and $vr(G,T)\le2$.
\end{lem}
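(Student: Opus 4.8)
The plan is to reduce to the $2$-connected case and there to choose a spanning tree whose fundamental cycles are \emph{exactly} the interior faces of the embedding. First I would decompose $G$ into its blocks, i.e.\ its maximal $2$-connected subgraphs together with the bridges. A maximal spanning forest of $G$ can be assembled from spanning trees of the individual blocks, since the unique $T$-path between two vertices lying in a common block stays inside that block; consequently every fundamental cycle of $G$ is a fundamental cycle of a single block, and it suffices to control $er$ and $vr$ block by block and then recombine.

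For a single block $B$, which is $2$-connected outerplanar of maximum degree at most $3$, the outer boundary is a Hamiltonian cycle and every remaining edge is a chord. The key observation, and the place where the degree bound is spent, is that every interior face of $B$ carries at least one edge of the outer boundary: were some interior face bounded only by chords, each of its (at least three) boundary vertices would be incident to two of those chords in addition to its two outer-cycle edges, forcing degree at least $4$. Using this I would pick, for each interior face $f$, one outer boundary edge $\mathrm{sel}(f)$ and declare $\{\mathrm{sel}(f)\}$ to be the set of non-tree edges, keeping every other edge (in particular every chord) in $T$. These choices are automatically distinct, since an outer edge borders exactly one interior face.

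Two things then need checking. That $T$ is a spanning tree follows from a count together with an acyclicity argument: Euler's formula gives $|E|=|V|+f-1$ for $f$ interior faces, so $T$ has exactly $|V|-1$ edges, and $T$ can contain no cycle $Z$, because $Z$ would enclose some interior face $f'$, which is adjacent to the outer face through $\mathrm{sel}(f')\notin T$ --- an edge that would then be forced to lie on the separating curve $Z$, a contradiction. That each fundamental cycle equals a face boundary follows because the only non-tree edge on $\partial f$ is $\mathrm{sel}(f)$ itself (the other outer edges of $f$ are unselected and its chords are never selected), so the $T$-path joining the endpoints of $\mathrm{sel}(f)$ is precisely the rest of $\partial f$.

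Granting that every fundamental cycle is an interior face, the bounds are immediate. Each edge lies on at most two interior faces (a chord on two, an outer edge on one), giving $er(B,T)\le2$; and a vertex $v$ of a $2$-connected outerplanar graph lies on exactly $\deg_B(v)-1$ interior faces, hence on at most $2$, giving $vr(B,T)\le2$. Summing the vertex count over the blocks through $v$ yields $\sum_B(\deg_B(v)-1)=\deg_G(v)-b(v)$, where $b(v)\ge1$ is the number of such blocks; since $\deg_G(v)\le3$ this is at most $2$, so $vr(G,T)\le2$, while $er(G,T)\le2$ holds trivially as each edge sits in a single block. The main obstacle I anticipate is exactly this gluing at cut vertices --- ensuring the per-block counts do not accumulate past $2$ --- which the identity $\deg_G(v)-b(v)\le2$ resolves; the secondary technical point is the topological justification that $T$ is acyclic, for which the fact that every interior face touches the outer face is once more the crucial input.
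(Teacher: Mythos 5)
Your proof is correct, and it lands on the same essential mechanism as the paper -- arrange the spanning forest so that every fundamental cycle bounds an interior face of the embedding, then bound $er$ by the fact that an edge borders at most two interior faces and $vr$ by the fact that a vertex of degree at most $3$ lying on the outer face borders at most two interior faces. The route there is different, though. The paper's proof is a two-line reduction to Lemma \ref{thm:lem79_planar_maxspan}: it deletes all exterior edges, observes (since every cycle of an outerplanar graph uses an exterior edge) that what remains is already a forest $T'$ with $er=vr=0$, and then extends $T'$ to a maximal spanning forest by re-adding exterior edges, inheriting the ``fundamental cycles are interior faces'' claim from that lemma. You instead give a self-contained construction: decompose into blocks, note that the degree bound forces every interior face of a $2$-connected block to carry an outer-cycle edge, discard exactly one such edge per interior face, and verify directly (edge count plus the Jordan-curve argument) that the result is a spanning tree whose fundamental cycles are precisely the face boundaries, finishing with the cut-vertex identity $\sum_B(\deg_B(v)-1)=\deg_G(v)-b(v)\le 2$. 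What your version buys is an explicit proof of the step the paper essentially asserts inside Lemma \ref{thm:lem79_planar_maxspan} (that the fundamental cycles coincide with interior faces), at the cost of the block-decomposition scaffolding; what the paper's version buys is brevity and reuse of a lemma it needs anyway for the inductive step to $k$-outerplanar graphs.
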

\begin{proof}
 Since all vertices of an outerplanar graph lie on the exterior face, any cycle must consist of at least one edge that lies on the exterior face as well. By removing all edges on the exterior face of the outerplanar graph $G$, a tree or forest $T'=(V,F')$ is thus obtained. Clearly, it is $er(T',T')=vr(T',T')=0$. The claim follows directly as in Lemma \ref{thm:lem79_planar_maxspan}, by observing that each vertex is adjacent to at most 2 interior faces.
\end{proof}

\begin{lem}\label{thm:lem81_step_kouterplanar_maxspan}
 Let $G=(V,E)$ be a $k$-outerplanar graph with $degree(G)\le3$. Then there exists a maximal spanning forest $T=(V,F)$ with $er(G,T)\le2k$, and $vr(G,T)\le3k-1$.
\end{lem}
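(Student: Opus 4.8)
The plan is to prove the statement by induction on $k$, using Lemma~\ref{thm:lem80_base_outerplanar_maxspan} for the base case and Lemma~\ref{thm:lem79_planar_maxspan} to carry out the inductive step. For $k=1$ the graph $G$ is outerplanar with $degree(G)\le 3$, so Lemma~\ref{thm:lem80_base_outerplanar_maxspan} directly yields a maximal spanning forest $T$ with $er(G,T)\le 2=2k$ and $vr(G,T)\le 2=3k-1$, establishing the base.

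For the inductive step I would assume $k\ge 2$ and that the claim holds for all $(k-1)$-outerplanar graphs of degree at most $3$. Let $H=(V,E')$ be obtained from $G$ by deleting all edges on the exterior face, exactly as in Lemma~\ref{thm:lem79_planar_maxspan}. First I would argue that $H$, with the embedding inherited from $G$, is $(k-1)$-outerplanar and still has $degree(H)\le 3$: the degree bound is immediate since $H$ is a subgraph of $G$, and the drop in outerplanarity follows because deleting the boundary edges of the outer face merges it with the ring of faces lying immediately inside it, so that every vertex that was at layer $j$ in $G$ lies at layer $j-1$ in $H$ (isolated vertices created by the deletion may be placed freely on the exterior face and do not affect the index).

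Given this, the induction hypothesis supplies a maximal spanning forest $T'=(V,F')$ of $H$ with $er(H,T')\le 2(k-1)$ and $vr(H,T')\le 3(k-1)-1$. Applying Lemma~\ref{thm:lem79_planar_maxspan} to $G$ together with this $T'$ produces a maximal spanning forest $T=(V,F)$ of $G$ with $T'\subseteq T$, satisfying $er(G,T)\le er(H,T')+2$ and $vr(G,T)\le vr(H,T')+degree(G)$. Substituting the inductive bounds and $degree(G)\le 3$ then gives $er(G,T)\le 2(k-1)+2=2k$ and $vr(G,T)\le (3(k-1)-1)+3=3k-1$, which is exactly the desired estimate and completes the induction.

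The main obstacle is the step asserting that $H$ is $(k-1)$-outerplanar. Lemma~\ref{thm:lem79_planar_maxspan} is phrased in terms of deleting the \emph{edges} on the exterior face, whereas Definition~\ref{def:outerplanar} reduces outerplanarity by deleting the \emph{vertices} on the outer face, so the two operations must be reconciled: one has to verify that removing the outer boundary edges exposes the entire next layer to the exterior face and hence decreases the outerplanarity index by at least one. Everything else is a direct combination of the two cited lemmas, and the constants have been arranged precisely so that the recurrences $er\mapsto er+2$ and $vr\mapsto vr+3$ close up exactly at $2k$ and $3k-1$.
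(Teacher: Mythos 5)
Your overall strategy is exactly the one the paper follows: Lemma~\ref{thm:lem80_base_outerplanar_maxspan} for $k=1$, then induction via Lemma~\ref{thm:lem79_planar_maxspan}, with the constants closing up as you compute. The problem is in the step you yourself single out as the main obstacle, namely that $H$ (the graph after deleting the outer-face edges) is $(k-1)$-outerplanar. Your proposed justification --- that deleting the boundary edges merges the outer face with the ring of faces inside it, ``so that every vertex that was at layer $j$ in $G$ lies at layer $j-1$ in $H$'' --- is false for general plane graphs. Concretely: let $C$ be a long cycle forming the outer face, let $v\in C$ have two further edges into the interior to adjacent vertices $x,y$ with the edge $xy$ present, and place a vertex $z$ inside the triangle $vxy$ joined to $x$ and $y$. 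Then $G$ is $2$-outerplanar ($x,y,z$ are all at layer $2$), but deleting the edges of $C$ leaves the triangle $vxy$ intact with $z$ still enclosed by it, so the resulting graph is again $2$-outerplanar: the index does not drop. Face merging alone does not expose the next layer.

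What rescues the step --- and what the paper's proof actually uses --- is the hypothesis $degree(G)\le 3$, which in your write-up is invoked only to conclude $degree(H)\le 3$. A vertex $w$ on the outer face of $G$ has all but at most one of its incident edges on the outer face (with degree at most $3$, the outer-face corner at $w$ already accounts for two of them), so after the deletion every former layer-$1$ vertex has degree $0$ or $1$ in $H$. In particular no such vertex lies on a cycle of $H$, so $H$ is just $G[V_{\ge 2}]$ --- which is $(k-1)$-outerplanar by Definition~\ref{def:outerplanar} --- together with pendant or isolated vertices sitting in the (enlarged) outer face, and these do not raise the outerplanarity index. Note that this is exactly where the counterexample above is excluded: there $v$ has degree $4$ and retains two interior edges, so the shielding triangle survives. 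With this argument substituted for the face-merging claim, your proof coincides with the paper's.
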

\begin{proof}
 The claim is shown by induction to $k$, the outerplanarity index. The base case where $k=1$ was shown in Lemma \ref{thm:lem80_base_outerplanar_maxspan}. For $k\ge2$, by removing all edges on the exterior face of $G$, the degree of any vertex on the outer face is equal to $0$ or $1$, and thus, the remaining graph is $(k-1)$-outerplanar. Lemma \ref{thm:lem79_planar_maxspan} and application of inductive reasoning yield the result.
\end{proof}

\begin{thm}\label{thm:thm71_tw_maxspan}
 Let $T=(V,F)$ be a maximal spanning forest of graph $G=(V,E)$. The treewidth of $G$ is at most $\max\{vr(G,T),er(G,T)+1\}$.
\end{thm}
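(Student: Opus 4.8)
The plan is to build an explicit tree decomposition of $G$ directly from the spanning forest $T$, taking the tree of the decomposition to be $T$ itself (after fixing a root in each component) and attaching to every node $v\in V$ a bag $X_v$ that contains $v$, its parent $p(v)$, and, for every non-tree edge $e$ whose fundamental cycle passes through $v$, a single chosen representative endpoint of $e$. The representative of each $e=(a,b)$ is routed along \emph{all} of its fundamental cycle $C_e$, i.e.\ it is placed in $X_v$ for every $v\in C_e$. Including $p(v)$ in each bag guarantees that every tree edge is covered, while routing one endpoint of $e$ through every vertex of $C_e$ guarantees that $a$ and $b$ meet in the bag of the other endpoint, so that the non-tree edge $e$ is covered as well.

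First I would verify the three axioms of Definition \ref{def:treewidth}. Vertex and edge coverage follow from the construction just described. The connectivity axiom is the reason for routing the representative along the entire cycle rather than only up to the least common ancestor of $a$ and $b$: since each $C_e$ is, by Definition \ref{def:vr_er}, the tree path joining the endpoints of $e$, the set of nodes whose bag receives a given routed vertex is a contiguous subpath of $T$, so the occurrences of every vertex form a subtree, exactly as required.

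The heart of the argument is the size estimate. By Definition \ref{def:vr_er} at most $vr(G,T)$ fundamental cycles pass through any fixed $v$, so a crude count gives $|X_v|\le vr(G,T)+2$ and hence treewidth at most $vr(G,T)+1$. To sharpen this to $\max\{vr(G,T),er(G,T)+1\}$ I would split the cycles through $v$ into those whose fundamental cycle uses the tree edge $(v,p(v))$ and those for which $v$ is the apex (the least common ancestor of the two endpoints). The first group consists of exactly the fundamental cycles using that single tree edge, so it has size at most $er(G,T)$; note also that any cycle through an edge necessarily passes through both its endpoints, whence $er(G,T)\le vr(G,T)$, which is what gives the maximum its two meaningful regimes.

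The main obstacle is to remove the remaining additive slack, landing on $\max\{vr(G,T),er(G,T)+1\}$ exactly rather than on the weaker $vr(G,T)+1$. This last unit is not cosmetic: the downstream estimates $er(G,T)\le 2k$ and $vr(G,T)\le 3k-1$ must combine to $3k-1$ and not $3k$. I expect the fix to require charging the parent vertex $p(v)$ and the apex contributions against a common budget, so that no bag simultaneously holds a full set of $vr(G,T)$ routed endpoints together with the extra parent vertex. A natural device is to process the children of each high-degree node one at a time along an auxiliary path of copies of $v$, keeping a bag's content at any moment limited by the boundary that genuinely crosses the current tree edge; getting this accounting right, while re-checking that the modified tree still satisfies the connectivity axiom, is where the real work lies.
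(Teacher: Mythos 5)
Your construction is sound as far as it goes, but it stops one step short of the theorem: with $T$ itself as the decomposition tree and $X_v=\{v,p(v)\}\cup\{\text{representatives of cycles through }v\}$ you only obtain $|X_v|\le vr(G,T)+2$, i.e.\ width $vr(G,T)+1$, and you explicitly leave the removal of the last unit as ``where the real work lies.'' That unresolved unit is the whole content of the statement (as you yourself note, it is what makes the downstream bound $3k-1$ rather than $3k$), so the proof is incomplete. Moreover, your proposed repair --- splitting a high-degree node into a path of copies and charging its children one at a time --- is aimed at the wrong quantity: the troublesome extra element of $X_v$ is the parent $p(v)$, which is present only to cover the tree edge $(v,p(v))$, and its cost has to be charged against $er(G,T)$, not against the degree of $v$ or the number of apex cycles.

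The device that closes the gap is to subdivide every tree edge. Take the decomposition tree to be $T'=(V\cup F,F')$, with a new node for each $e\in F$ inserted between its two endpoints. Set $X_v=\{v\}\cup\{\text{one representative per fundamental cycle through }v\}$ for $v\in V$, and $X_e=\{v,w\}\cup\{\text{one representative per fundamental cycle using }e\}$ for $e=(v,w)\in F$. Every tree edge is now covered by its own subdivision bag, so the parent no longer needs to appear in $X_v$, giving $|X_v|\le 1+vr(G,T)$; and $|X_e|\le 2+er(G,T)$ because the representatives routed through the node $e$ correspond exactly to the fundamental cycles using $e$. Connectivity is preserved since each routed representative still occupies a contiguous subpath of $T'$ containing its own node. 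This yields width $\max\{vr(G,T),er(G,T)+1\}$ exactly, and it is precisely the paper's argument.
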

\begin{proof}
 Let $T'$ be the tree obtained by adding an extra vertex in the middle of each edge in $T$, i.e.\ $T'=(V\cup F,F')$, where $F'=\{(v,e)|v\in V,e\in F,\exists w\in V:e=(v,w)\}$. The construction of sets $X_i$, $i\in V\cup F$ is shown, so that $(\{X_i|i\in V\cup F\},T'=(V\cup F,F'))$ is a tree decomposition of $G$.
 
 First, every vertex $v\in V$ is added to $X_v$ and for every edge $(v,w)\in F$, both endpoints are added to $X_{(v,w)}$. Second, for every edge $(v,w)\in E-F$, one endpoint is chosen arbitrarily, say $v$. This vertex is added to each $X_x$, for all vertices $x\in V$, $x\not=w$ that appear on the fundamental cycle of $(v,w)$, but not in $X_w$. The same vertex $v$ is also added to $X_e$, for all edges $e\in F$ that appear on the fundamental cycle of $(v,w)$ as well.
 
 It is easily verified that in this way, a tree decomposition of $G$ is obtained. Furthermore, for all $v\in V$, it is $|X_v|\le1+vr(G,T)$, while for all $e\in F$, it is $|X_e|\le2+er(G,T)$. Thus, the width of this tree decomposition is at most $\max\{vr(G,T),er(G,T)+1\}$.
\end{proof}

\begin{lem}\label{thm:lem16_tw_minor}
 If $G$ is a minor of $H$, then $treewidth(G)\le treewidth(H)$.
\end{lem}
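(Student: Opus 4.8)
The plan is to reduce to the three elementary operations that define a minor in Definition \ref{def:minor}---vertex deletion, edge deletion, and edge contraction---and to show that none of them increases treewidth. Since $G$ is obtained from $H$ by a finite sequence of such operations, it suffices to prove that each single operation, applied to a graph $H'$ admitting a tree decomposition of width $w$, yields a graph admitting a tree decomposition of width at most $w$; chaining these bounds then gives $treewidth(G)\le treewidth(H)$. First I would fix an optimal tree decomposition $(\{X_i\mid i\in I\},T=(I,F))$ of $H'$ and describe how to transform it in each of the three cases.

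For the two deletion operations the transformation is simply to restrict the existing decomposition. If a vertex $v$ is deleted, I would set $X_i'=X_i\setminus\{v\}$ for every $i\in I$ while keeping the same tree $T$; each of the three conditions of Definition \ref{def:treewidth} is inherited, and since the bags only shrink the width cannot grow. Deleting an edge requires no change at all: the same pair $(\{X_i\},T)$ remains a tree decomposition of $H'$ with that edge removed, because deleting an edge only weakens the second (edge-covering) condition and leaves the other two untouched.

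The main work, and the main obstacle, will be the edge-contraction case. Suppose the edge $(v,w)$ is contracted into a new vertex $u$. I would build the new decomposition by relabelling: in every bag containing $v$ or $w$, replace both of these with the single vertex $u$, leaving all other entries untouched, and keep the tree $T$ fixed. A bag that previously held both $v$ and $w$ then loses one element, while every other bag keeps its size, so the width does not increase. The first two conditions are routine---in particular, an edge incident to $u$ descends from an edge incident to $v$ or $w$, which was already covered by some bag---so the crux is verifying the third (connectivity) condition for $u$. The set of tree nodes whose relabelled bag contains $u$ is exactly $I_v\cup I_w$, where $I_v$ and $I_w$ are the node sets for $v$ and $w$ in the original decomposition. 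Each of $I_v$ and $I_w$ induces a connected subtree of $T$ by the original connectivity condition, and because $(v,w)$ is an edge of $H'$ the edge-covering condition guarantees some node whose bag contains both, so $I_v\cap I_w\ne\emptyset$. The union of two connected subtrees of a tree that share a node is again connected, which establishes the condition for $u$; for every untouched vertex the condition is inherited directly. This subtree-intersection argument is the one step that genuinely uses the tree structure, and it is where I would concentrate the verification.
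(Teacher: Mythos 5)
Your proposal is correct and follows essentially the same route as the paper: the paper treats the deletion operations via the subgraph case (intersecting bags with the surviving vertex set) and handles contraction by the same relabelling of $v$ and $w$ to a single vertex. Your write-up is somewhat more careful in that it explicitly verifies the connectivity condition for the contracted vertex via the union of two intersecting subtrees, a check the paper leaves implicit.
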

\begin{proof}
 In the case where $G$ is a subgraph of $H$, then removing any occurrences of vertices not in $G$ from the tree decomposition of $H$, yields a tree decomposition of $G$ of equal, or smaller treewidth. That is, if $G=(V,E)$ is a subgraph of $H=(V',E')$ and $(\{X_i|i\in I\},T=(I,F))$ is a tree decomposition of $H$, then $(\{X_i\cap V)|i\in I\},T=(I,F))$ is a tree decomposition of $G$ that satisfies the claim.
 
 In the case where $G$ is obtained by contracting an edge $(v,w)$ of $H$ to a vertex $x$, by replacing all occurrences of $v$ and $w$ in the sets $X_i$ of the tree decomposition of $H$, by occurrences of $x$, one gets a tree decomposition of $G$ of equal, or smaller treewidth. That is, if $(\{X_i|i\in I\},T=(I,F))$ is a tree decomposition of $H$ with treewidth $k$, then $(\{X_i'|i\in I\}, T=(I,F))$ is a tree decomposition of $G$, with treewidth $\le k$, where $X_i'=X_i$, if $v,w\notin X_i$ and $X_i'=(X_i-\{v,w\})\cup\{x\}$, if $v\in X_i$ or $w\in X_i$.
\end{proof}

\begin{thm}\label{thm:thm83_tw_upbound}
 The treewidth of a $k$-outerplanar graph $G=(V,E)$ is at most $3k-1$.
\end{thm}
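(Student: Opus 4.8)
The plan is to assemble the intermediary results proved above into one chain, separating the base case $k=1$ from the general case $k\ge2$. For $k=1$ the graph $G$ is simply outerplanar, so Lemma \ref{thm:lem78_tw2} gives $treewidth(G)\le2=3k-1$ at once, with no further work.

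For $k\ge2$, I would first invoke Lemma \ref{thm:lem82_degree} to obtain a $k$-outerplanar graph $H=(V',E')$ with $degree(H)\le3$ of which $G$ is a minor. The reason for detouring through $H$ is that the remember-number bounds of Lemma \ref{thm:lem81_step_kouterplanar_maxspan} are stated only for graphs of degree at most $3$, whereas $G$ itself may contain high-degree vertices. Applying that lemma to $H$ then produces a maximal spanning forest $T=(V',F)$ with $er(H,T)\le2k$ and $vr(H,T)\le3k-1$.

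Next I would feed these two bounds into Theorem \ref{thm:thm71_tw_maxspan}, obtaining
\[
 treewidth(H)\le\max\{vr(H,T),\,er(H,T)+1\}\le\max\{3k-1,\,2k+1\}.
\]
This is the one step demanding a little care, and the point I expect to be the main obstacle: the maximum equals $3k-1$ exactly when $3k-1\ge2k+1$, i.e.\ when $k\ge2$, which is precisely the regime at hand; for $k=1$ the right-hand side would instead be $3$, and that is why the base case must be treated separately through the sharper Lemma \ref{thm:lem78_tw2}. Hence for $k\ge2$ we get $treewidth(H)\le3k-1$.

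Finally, because $G$ is a minor of $H$, Lemma \ref{thm:lem16_tw_minor} yields $treewidth(G)\le treewidth(H)\le3k-1$, which finishes the proof. The argument is in essence bookkeeping; no new combinatorial insight is needed beyond the lemmas already in hand, and the only genuine subtlety is noticing that the passage to the bounded-degree minor-supergraph $H$ is forced by the hypotheses of the spanning-forest lemma, together with the $k=1$ split that keeps the bound tight.
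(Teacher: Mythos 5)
Your proposal is correct and follows exactly the same route as the paper's own proof: the $k=1$ base case via Lemma \ref{thm:lem78_tw2}, then for $k\ge2$ the chain Lemma \ref{thm:lem82_degree} $\to$ Lemma \ref{thm:lem81_step_kouterplanar_maxspan} $\to$ Theorem \ref{thm:thm71_tw_maxspan} $\to$ Lemma \ref{thm:lem16_tw_minor}. Your added remark on why $\max\{3k-1,2k+1\}=3k-1$ only for $k\ge2$ is a correct and welcome clarification of a point the paper leaves implicit.
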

\begin{proof}
 In the case where $k=1$, Lemma \ref{thm:lem78_tw2} suffices. Subsequently, the claim is shown for the case where $k\ge2$. By Lemma \ref{thm:lem82_degree}, there exists a $k$-outerplanar graph $H$, such that $G$ is a minor of $H$, and $degree(H)\le3$. By Lemma \ref{thm:lem81_step_kouterplanar_maxspan}, there exists a maximal spanning forest $T$ of $H$, such that $er(H,T)\le2k$ and $vr(H,T)\le3k-1$. By Theorem \ref{thm:thm71_tw_maxspan}, $treewidth(H)\le\max\{3k-1,2k+1\}=3k-1$. By Lemma \ref{thm:lem16_tw_minor}, $treewidth(G)\le3k-1$.
\end{proof}

A similar result was shown by Robertson and Seymour \cite{Robertson198449}, yet based on the notion of \emph{radius}. The radius of a planar graph is the maximum distance of a face to the exterior face, calling two faces adjacent when they share a vertex. The proof is omitted here.

\begin{thm}\label{thm:radius}
 The treewidth of a planar graph with radius $d$ is at most $3d+1$.
\end{thm}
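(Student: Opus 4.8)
The plan is to reuse the spanning-forest machinery underlying Theorem \ref{thm:thm83_tw_upbound}: namely Theorem \ref{thm:thm71_tw_maxspan}, which bounds treewidth by $\max\{vr(G,T),er(G,T)+1\}$, together with the bounded-degree reduction of Lemma \ref{thm:lem82_degree} and the minor-monotonicity of Lemma \ref{thm:lem16_tw_minor}. The only genuinely new ingredient needed is a bound on the remember numbers of a suitable maximal spanning forest in terms of the radius $d$ rather than the outerplanarity index $k$. So the skeleton is: first replace $G$ by a radius-$d$, degree-$\le 3$ graph $H$ with $G$ a minor of $H$; then exhibit a maximal spanning forest $T$ of $H$ with $vr(H,T)\le 3d+1$ and $er(H,T)\le 2d$; and finally conclude $treewidth(H)\le\max\{3d+1,2d+1\}=3d+1$ by Theorem \ref{thm:thm71_tw_maxspan}, hence $treewidth(G)\le 3d+1$ by Lemma \ref{thm:lem16_tw_minor}.

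\textbf{First route (quick but lossy).} I would relate the two parameters directly, showing by induction on $\ell$ that every vertex at outerplanarity-layer $\ell$ lies on the boundary of some face at face-distance at most $\ell-1$ from the outer face. A layer-$1$ vertex lies on the outer face (distance $0$); and when peeling exposes a layer-$\ell$ vertex $v$, it becomes adjacent on the new boundary to a layer-$(\ell-1)$ vertex $u$, so $v$ and $u$ share a face $g$, and since $u$ lies by induction on a face of distance $\le\ell-2$, the face $g$ shares the vertex $u$ with it and thus has distance $\le\ell-1$. Hence $k\le d+1$, and Theorem \ref{thm:thm83_tw_upbound} gives $treewidth(G)\le 3(d+1)-1=3d+2$. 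This is off by one from the claim.

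\textbf{Second route (tight).} To recover the stated bound I would redo the induction of Lemmas \ref{thm:lem79_planar_maxspan}--\ref{thm:lem81_step_kouterplanar_maxspan} in terms of face-layers instead of vertex-layers. After reducing to $degree(G)\le 3$, I peel the graph one face-layer at a time, applying at each step the construction of Lemma \ref{thm:lem79_planar_maxspan}, which raises $er$ by at most $2$ and $vr$ by at most $degree(G)\le 3$. The crucial difference from the outerplanar case lies in the base case: once all faces have been peeled the graph is a forest, for which $er=vr=0$, whereas the outerplanar base case of Lemma \ref{thm:lem80_base_outerplanar_maxspan} already costs $er=vr=2$. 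Accumulating $d$ layers then yields $er(H,T)\le 2d$ and $vr(H,T)\le 3d$, so Theorem \ref{thm:thm71_tw_maxspan} gives $treewidth(H)\le\max\{3d,2d+1\}=3d$ for $d\ge 1$ (the case $d=0$ being a forest of treewidth $\le 1$), comfortably within $3d+1$.

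\textbf{Main obstacle.} The delicate point is that the radius here uses face adjacency by a \emph{shared vertex}, under which removing the edges on the exterior face need not decrease the radius: a face meeting the old outer face in a single vertex still meets the enlarged outer face in that vertex and keeps distance $1$. Thus the clean ``peel the outer edges'' step of Lemma \ref{thm:lem79_planar_maxspan} does not transfer verbatim, and the hard part is to define a face-layer peeling for the shared-vertex metric that both strictly reduces the radius by one per step and still admits the per-layer increments of $2$ on $er$ and $3$ on $vr$, accounted through the interior faces of the peeled annulus. A secondary point is to verify that the expansion of Lemma \ref{thm:lem82_degree}, engineered to preserve vertex-layers, can also be carried out without increasing the face-radius; since expansion only subdivides the rotation at a single vertex and can place the lowest-layer incident face adjacent to all others, I expect this to go through with essentially the same argument.
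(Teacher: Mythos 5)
The paper itself omits the proof of Theorem \ref{thm:radius} (it is attributed to Robertson and Seymour), so your proposal must stand on its own; as written it does not yet establish the stated bound. The only route you carry to completion is the first one, which yields $3d+2$ rather than $3d+1$, and even there the key induction is stated in the wrong direction: knowing that every layer-$\ell$ vertex lies on \emph{some} face of face-distance at most $\ell-1$ is an upper bound on the distance of one particular face, and it cannot be combined with the hypothesis that \emph{all} faces have distance at most $d$ to conclude $k\le d+1$. What you actually need is the converse statement: every vertex lying on a face at face-distance $m$ has vertex-layer at most $m+1$. This follows by walking along a face chain $f_0,\dots,f_m$ from the outer face, letting $u_i$ be a vertex shared by $f_i$ and $f_{i+1}$, and observing that the layer of $u_i$ is at most $i+1$; since every vertex lies on some face and every face has distance at most $d$, one gets $k\le d+1$. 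The repair is easy, but this route still only delivers $3(d+1)-1=3d+2$.

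The second route, the one that would prove the theorem, contains a gap that you identify but do not close. Because face-adjacency in the definition of radius is via a shared \emph{vertex}, deleting the edges of the exterior face need not decrease the radius, so Lemma \ref{thm:lem79_planar_maxspan} cannot be applied layer-by-layer as in Lemma \ref{thm:lem81_step_kouterplanar_maxspan}. To make this work you must specify a concrete peeling operation (for instance, deleting all edges incident to vertices of the outer face, or all edges of faces meeting the outer face) and re-prove the analogue of Lemma \ref{thm:lem79_planar_maxspan} for it: both that the radius drops by exactly one per step, and that the fundamental cycles of the re-inserted edges still bound interior faces of the auxiliary graph $K$, so that the per-step increments of $2$ on $er$ and $degree(G)$ on $vr$ survive. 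Neither verification is attempted, and a warning sign that the accounting is not yet right is that your sketch would give $3d$, strictly better than the theorem's $3d+1$. The claim that the expansion of Lemma \ref{thm:lem82_degree} preserves the face-radius is likewise only conjectured. Until the peeling lemma is stated and proved, the argument is incomplete.
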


\section{Computing a tree decomposition}
This section introduces the algorithm that computes a tree decomposition of a $k$-outerplanar graph of width at most $3k-1$. An overview of the algorithm is presented first, then each step is further discussed separately and finally, the computational complexity of each step (temporal and spatial) is analyzed. 

\paragraph{Overview of the algorithm.} The algorithm is divided into five distinct steps, each based on ideas arising from results of the previous section. The correctness of each step is based on the corresponding result. The overall complexity of $O(kn)$, in both time and space, is shown in the subsequent analysis for each step. Although each step of the algorithm is directly based on specific results, the underlying ideas cannot be trivially applied to obtain an algorithm of the proposed efficiency. Several algorithmic problems arise  when trying to implement the results of the previous section within the restrictions of linear time and space, the solutions to which are mentioned in the analysis of each step.

The algorithm receives a $k$-outerplanar (embedding of) graph $G=(V,E)$  as input and produces a tree decomposition $\mathcal{T}_G=(\{X_i|i\in I\},T'=(I,F'))$ of $G$ as output, where $\max_{i\in I}|X_i|-1\le3k-1$. The outerplanarity index $k$ can also be considered as input, or alternatively, it can be computed in a separate preprocessing step, obviously in linear time for a given embedding (in general, $k$ can be computed in quadratic time \cite{Kammer:2007:DSK:1778580.1778615}). The main steps of the algorithm are the following.
\begin{enumerate}[i.]
 \item\label{step_I} If $k=1$, construct a tree decomposition $\mathcal{T}_G$ of width at most 2 of $G$, according to Lemma \ref{thm:lem78_tw2}. In the case where $k\ge2$, move to Step \ref{step_II}.
 \item\label{step_II} If there exists some vertex $v\in V$ with $degree(v)>3$, then expand $v$ according to Lemma \ref{thm:lem82_degree}. When no more such vertices exist, move to Step \ref{step_III}.
 \item\label{step_III} Construct a maximal spanning forest $T=(V,F)$ of $G$, with $vr(G,T)\le3k-1$, according to Lemmas \ref{thm:lem81_step_kouterplanar_maxspan}, \ref{thm:lem79_planar_maxspan} and \ref{thm:lem80_base_outerplanar_maxspan}.
 \item\label{step_IV} Construct a tree decomposition $\mathcal{T}_G$ of width at most $3k-1$ of $G$, based on $T$ and according to Theorem \ref{thm:thm71_tw_maxspan}.
 \item\label{step_V} For all vertices $v$ that were expanded during Step \ref{step_II}, modify the tree decomposition $\mathcal{T}_G$ of $G$, according to Lemma \ref{thm:lem16_tw_minor}.
\end{enumerate}

\paragraph{Step \ref{step_I}: Decomposing an outerplanar graph.} In the case where $k=1$, the input graph $G$ is an outerplanar graph. In this case, the treewidth of $G$ is at most 2 and a tree decomposition of this width can be computed as follows. First, a vertex $v$ of degree 2 is chosen and removed from the graph, while an edge is added between the two neighboring vertices of $v$, if needed. Vertices of degree 1 are handled in the same way.\footnote{Handling isolated vertices is considered trivial.} Since $G$ is outerplanar, there is always at least one vertex of degree 1, or 2, and since the resulting graph remains outerplanar every time a vertex is removed, this procedure can be repeated until only one edge remains in the graph.

At this point, the construction of the tree decomposition $\mathcal{T}_G$ begins, by adding a node to $\mathcal{T}_G$ that contains both endpoints of the only remaining edge. Then, for every removed vertex $v$, a node containing $v$ and both its neighbors is added to $\mathcal{T}_G$, adjacent to the previously added node that contains the neighbors of $v$. 
When this sequence of node additions is finished, the tree decomposition $\mathcal{T}_G$ is complete and by construction, has width at most 2.

\paragraph{Step \ref{step_II}: Expanding vertices.} The purpose of this step of the algorithm is the transformation of $G$ into a graph $H$, where the degree of any vertex is no higher than 3, such that $G$ is a minor of $H$. Since ideas used in the following steps require a maximum degree of 3, this step ensures that this is indeed the case. As explained in the proof of Lemma \ref{thm:lem82_degree}, this can be accomplished by first computing the layer of each face and then going through all the vertices of $G$ and expanding those, whose degree $d$ is higher than 3.

This involves replacing the vertex by a path of $d-2$ vertices and the reconnection of its edges to these new vertices, so that a face of lowest layer will be placed adjacent to all others. In general, the faces this vertex belongs to are arranged in the way illustrated by Figure \ref{fig:degree_construction}. Note that the order in which the expansions are applied to the vertices of $G$ makes no difference to the final outcome. Contraction and removal of edges and vertices as mentioned in the definition of a
minor (Definition \ref{def:minor}) can be seen as a reverse action to the expansion of a vertex in this way, a fact used in the final step. After the completion of this step, all vertices have a maximum degree of 3.

\paragraph{Step \ref{step_III}: Finding a useful spanning forest.} In this step a maximal spanning forest $T$ of $G$ is constructed, with $vr(G,T)\le3k-1$ (even if it was transformed in the previous step, the input graph is called $G$). This step of the algorithm operates in two distinct stages, the \emph{stripping} stage and the \emph{building} stage. During the stripping stage, all edges on the outer face of the graph are repeatedly removed and placed in corresponding graphs to be used later. During the building stage, $T$ is constructed by repeatedly adding some edges from the graphs containing the edges removed in the stripping stage.

In more detail, the stripping stage consists of $k$ stripping steps. During each step $i$, the $(k-i)$-outerplanar graph $G_{k-i}$ is obtained by removing all edges on the outer face from the remaining graph $G_{k-i+1}$ of the previous step. The subgraph of removed edges of each step is denoted by $R_{k-i}$. After $k-1$ such steps, an outerplanar graph $G_{1}$ is obtained. Since $G_1$ may not be a forest, an additional step is required. By once again removing all edges on the outer face (placing them in $R_0$), the outerplanar forest $G_0=T_0$ ensues and the stripping stage is concluded.

After the stripping stage, the maximal spanning forest $T$ of the original graph $G$ is iteratively constructed during the building stage. This stage consists of the same number of steps, with a direct correspondence between the steps of the two stages. In particular, the objective of each building step $j$ is to construct a maximal spanning forest $T_j$ for the graph $G_j$, obtained during the corresponding stripping step.\footnote{Indices differ between the two stages, as their direction is reversed.} During each building step $j$, a number of edges is chosen from the graph $R_{j-1}$ and added to $T_{j-1}$, which produces $T_j$. Since the constructs are maximal spanning forests, the edges can be chosen and added arbitrarily, as long as the following hold.
\begin{itemize}
 \item There is no cycle in the resulting graph, i.e.\ $T_j$ is a forest.
 \item The number of edges added is maximal, i.e.\ adding another edge from $R_{j-1}$ will result in a cycle in $T_j$.
 \item Each $T_j$ consists of exactly one tree for every connected component of $G_j$.
\end{itemize}

In order to enforce the above three conditions for the addition of edges to the forest of the previous step, while staying within the computational bounds of linearity, a more detailed analysis is required. The main problem here is keeping track of the connected components of each graph $G_j$ and managing the number of trees in each $T_j$, so a direct correspondence between the two is maintained. A naively direct approach to this task can result in non-linear computational time, or space requirements.

For ease of exposition, the number of connected components of $G_j$ (and so the number of trees in $T_j$) can be assumed to be one for each step, without loss of generality. This assumption is valid, since the only choice for addition of edges is between those included in the graph $R_{j-1}$, but there is no edge in this graph that would connect the different components of $G_j$ (those edges, if any, were removed at a different step). Because of this fact, each component of $G_j$ will be handled independently by the algorithm, even without assuming a single target tree. This assumption enables the discussion of each step to focus on the connection of all trees of $T_{j-1}$ in a maximal way and without creating any cycles. 

Next, the structure of the graph of remaining edges $R_{j-1}$ is examined. By construction, any graph $R_{j-1}$ consists of edges (as well as the vertices they connect) that were removed because they were all the edges on the outer face of $G_j$. This means that any graph $R_{j-1}$ is an outerplanar graph and must consist of a number of simple cycles and paths connected to each other, with every cycle defining an empty face, while paths (not contributing to a simple cycle) are connected together in the form of trees. The trees that make up the forest $T_{j-1}$ were originally enclosed by these cycles and for that reason, the existence of at least one such cycle in $R_{j-1}$ is assumed. A somewhat general example of the potential structure of an $R_{j-1}$ graph is presented in Figure \ref{fig:removed_graphs}.

\begin{figure}[htbp]
 \centerline{\includegraphics[width=130mm]{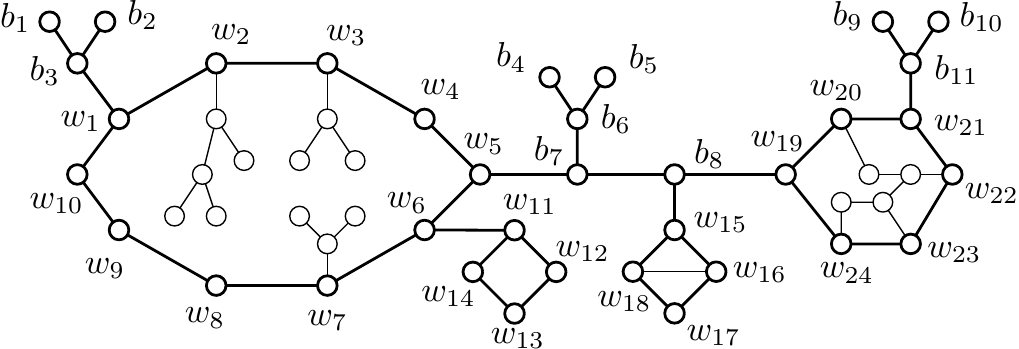}}
 \caption{An example of the potential structure of any $R_{j-1}$ graph (bold) in addition to the forest $T_{j-1}$ (subtle) examined in the same step.}
 \label{fig:removed_graphs}
 \end{figure}
 
Figure \ref{fig:removed_graphs} illustrates the addition of a graph $R_{j-1}$ to the corresponding forest $T_{j-1}$, the parts defining the first appearing in bold. Note the existence of multiple cycles connected by a collection of edges that alone would form distinct trees. In the following, the cycles in $R_{j-1}$ are called \emph{wheels} and the tree-like connected paths are called \emph{branches},\footnote{Not to be confused with notions relating to branchwidth, discussed in the following section.} mostly for clarity. In the above example, all vertices $b_x$ belong to a branch and all vertices $w_y$ belong to a wheel. Without the assumption of only one target tree, several similar constructions would exist elsewhere, with no effect on each other. The goal here is apparent: to choose the maximal set of edges from $R_{j-1}$ to connect $T_{j-1}$ and all vertices that appear in the addition of the two, without leaving any cycles completely intact.

Due to the maximum degree of any vertex being 3, the following fact holds about the connections between wheels and branches in $R_{j-1}$, as well as the trees of $T_{j-1}$. 
\begin{fact}\label{thm:fact_branch_wheel}
 No wheel in any graph $R_{j}$ shares any vertex with another wheel and no tree in a forest $T_{j}$ shares any vertex with a branch in $R_{j}$.
\end{fact}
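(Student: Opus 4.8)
The plan is to prove both assertions by contradiction, relying throughout on two structural features of $R_{j-1}$: that every one of its edges is incident to the outer face of $G_j$ (immediate from the stripping stage, since $R_{j-1}$ collects exactly the outer-face edges), and that every vertex has degree at most $3$ (guaranteed after Step \ref{step_II}, cf.\ Lemma \ref{thm:lem82_degree}). A third ingredient, used only for the second assertion, is the enclosure property noted above: each tree of $T_{j-1}$ lies inside a bounded face of $R_{j-1}$ delimited by a wheel.

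For the wheel--wheel part I would argue as follows. Suppose two distinct wheels $C_1$ and $C_2$ shared a vertex $v$. Each wheel is a simple cycle and so contributes exactly two edges incident to $v$; since $v$ has degree at most $3$, the two wheels cannot be edge-disjoint at $v$ and must share an edge $e$ incident to $v$. But then $e$ lies on the boundary of both enclosed (empty) faces, the one bounded by $C_1$ and the one bounded by $C_2$, and hence is incident to two bounded faces of $R_{j-1}$ -- whereas an edge borders exactly two faces. This leaves no side of $e$ incident to the outer face, contradicting the defining property of $R_{j-1}$. Thus no shared vertex can exist, and the wheels are vertex-disjoint.

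For the tree--branch part I would exploit the distinction between bridge edges and cycle edges in a plane graph. The branches of $R_{j-1}$ are precisely its bridges, i.e.\ edges lying on no cycle; in the embedding a bridge has the same face on both of its sides, and since it belongs to $R_{j-1}$ that common face is the outer face of $G_j$. Suppose some vertex $v$ belonged both to a tree $T'$ of $T_{j-1}$ carrying at least one edge and to a branch. Because $v$ lies on such a $T'$ it carries an inner edge leading into the bounded face in which $T'$ is enclosed; but the bridge edges at $v$ have the unbounded outer face on both sides, so the inner edge would be forced to emanate into the outer region rather than into the bounded region containing $T'$, contradicting the enclosure of $T'$ by a wheel. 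The degree bound disposes of the one remaining configuration: a vertex incident to the two cycle edges of a wheel and also to a branch bridge has already spent all three of its edges, hence carries no inner edge and forms only a trivial (edgeless) component of $T_{j-1}$, so it lies on no such tree.

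I expect the branch case to be the main obstacle, since it is the one that genuinely requires a planarity argument rather than a pure counting argument: one must argue carefully, via the face incidences of $R_{j-1}$ (equivalently, a Jordan-curve separation effected by the enclosing wheel), that a bridge vertex cannot serve as the attachment point of a tree residing in a bounded face. The wheel case, by contrast, reduces cleanly to the degree-$3$ bound together with the observation that no edge of $R_{j-1}$ can border two bounded faces.
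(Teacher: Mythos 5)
Your proof is correct, but it is worth noting that the paper itself gives essentially no proof of this Fact: it is justified only by the single preceding clause ``due to the maximum degree of any vertex being 3.'' Your argument uses that same degree bound as its core, but you correctly recognize that the degree bound alone does not close either case, and you supply the missing planarity ingredients. For the wheel--wheel part, two simple cycles sharing a vertex in a degree-$3$ graph are not impossible in general (theta-graph configurations have maximum degree $3$); your observation that a shared edge would border two bounded empty faces of $R_{j-1}$ and hence could not lie on the outer face of $G_j$ is exactly what is needed to rule out the edge-sharing subcase that pure counting leaves open. For the tree--branch part, the degree bound only disposes of the configuration where the vertex also lies on a wheel; your Jordan-curve/face-incidence argument (a pure bridge vertex sees only the outer face of $R_{j-1}$ in every angular sector, so it cannot emit an edge into the bounded face enclosing a tree of $T_{j-1}$) handles the remaining configurations, which the paper silently ignores. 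Two minor caveats: you lean on the enclosure property (every nontrivial tree of $T_{j-1}$ lies inside a face bounded by a wheel), which the paper also asserts without proof, so your argument inherits that dependency rather than eliminating it; and you occasionally conflate faces of $R_{j-1}$ with faces of $G_j$ --- the statement you actually need and use is that the outer face of $G_j$ is contained in (in fact here equals) the outer face of $R_{j-1}$, since all boundary edges of the outer face of $G_j$ are retained in $R_{j-1}$. Your explicit handling of edgeless components of $T_{j-1}$ is also a genuine improvement, since the Fact as literally stated would otherwise fail for $\alpha$-vertices viewed as trivial trees.
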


Since only wheels and branches exist in $R_{j-1}$, the first part of Fact \ref{thm:fact_branch_wheel} means there must always exist a branch between two wheels and furthermore, that this branch is unique for this pair. The second part implies that trees in $T_{j-1}$ can only be connected to a wheel, through some of their leaves, which cannot also belong to a branch. Finally, apart from the path around the wheel, any two vertices on a wheel can only be connected through some tree of $T_{j-1}$, as are vertices $w_{20},w_{22},w_{23}$ and $w_{24}$ in the example of Figure \ref{fig:removed_graphs}.

The above observations allow for a distinction and characterization of all vertices appearing in $R_{j-1}$ in three distinct categories, as follows.
\begin{itemize}
 \item A vertex is characterized as $\alpha$, if it belongs to both a branch and a wheel, connecting the two (examples are $w_1,w_{5},w_{15},w_{19}$ and $w_{21}$ in Figure \ref{fig:removed_graphs}).
 \item A vertex is characterized as $\beta$, if it belongs to both a wheel and a tree of $T_{j-1}$, connecting the two (examples are $w_2,w_{16},w_{18},$ and $w_{24}$).
 \item A vertex is characterized as $\gamma$, if it only belongs to either a wheel, or a branch, meaning all the remaining vertices of $R_{j-1}$ (examples are $b_1,w_4,w_8,$ and $b_{11}$). 
\end{itemize}

After this characterization of the vertices in $R_{j-1}$, the process of combining graphs $T_{j-1}$ and $R_{j-1}$ into a maximal spanning forest $T_{j}$ of $G_j$ can be outlined. First, all edges connecting two $\gamma$ vertices are added to the graph. There is no dispute over the addition of these edges, as some of them are the only connection points between the vertices they connect and the rest of the graph (edges in a branch), while the rest can be safely added since prevention of cycles through them is handled in the following. Second, of the three edges of all $\alpha$ vertices, add the edge that doesn't belong to the wheel, in order to connect the wheel with the rest of the graph. Upon exhaustion of these two cases, only the wheels of $R_{j-1}$ remain (partly) unspoken for.

By the above, the examination of each wheel can be discussed separately.\footnote{This does not imply an algorithmic separation of each wheel, only of analytical presentation.} Begin by selecting a vertex of type $\alpha$, that connects the wheel to a branch and fix an arbitrary directional characterization of the wheel. Moving through vertices according to one direction, add any encountered edges, until a vertex of type $\beta$ is reached. At this point, traverse the entire tree of $T_{j-1}$ that this vertex connects and identify all leaves of the tree marked as $\beta$. These are the connections of the tree to the wheel and should have no other paths connecting them, outside the tree itself.

If there are multiple $\beta$ leaves identified for this tree, continue along the wheel in the same direction, adding all edges between any two of the selected $\beta$ vertices, except one. The edge that is not added can be arbitrary, or chosen to be the last edge before every second $\beta$ vertex. After this is done, continue adding edges around the wheel until another $\beta$ vertex is reached, at which point repeat the above procedure. When the final vertex is reached, do not add the edge connecting it to the first vertex. Applying the above to every wheel in $R_{j-1}$ concludes the construction of $T_{j-1}$.

After all such $T_j$ are constructed in this way, the maximal spanning forest $T$ is complete.

\paragraph{Step \ref{step_IV}: Creating the tree decomposition.} During this step, a tree decomposition $\mathcal{T}_G=(\{X_i|i\in I\}, T'=(I,F'))$ of $G=(V,E)$ is constructed, based on the previously computed $T=(V,F)$ and having width at most $3k-1$. The tree $T'$ of the tree decomposition is obtained by adding an extra vertex in the middle of every edge in $T$, i.e.\ the set of nodes is $I=V\cup F$ and the set of edges is $F'=\{(v,e)|v\in V,e\in F,\exists w\in V:e=(v,w)\}$. The sets $X_i$ that compose the nodes of $\mathcal{T}_G$ are constructed in the following way.
\begin{itemize}
 \item Every vertex $v\in V$ is added to the corresponding set $X_v$.
 \item For every edge $e=(v,w)\in F$ of the spanning forest $T$, both vertices $v$ and $w$ are added to the corresponding set $X_{e}$.
 \item For every edge $(v,w)\in E-F$ that is included in $G$ but is missing in $T$, one endpoint is chosen arbitrarily, say $v$. This vertex is added to every set $X_x$ that corresponds to a vertex $x\in V$, appearing on the fundamental cycle of $(v,w)$ in $T$, except for the set $X_w$. The same vertex $v$ is also added to every set $X_e$ that corresponds to an edge $e\in F$, appearing on the fundamental cycle of $(v,w)$ in $T$.
\end{itemize}

As shown in the proof of Theorem \ref{thm:thm71_tw_maxspan}, the above yield the desired tree decomposition $\mathcal{T}_G$ of width at most $3k-1$. While the first two cases can be easily implemented within the bounds of linear complexity, the third case involves the computation of the fundamental cycle of every missing edge in $T$. The number of such edges is $|E|-|V|+1$ and a trivial approach to the computation of every fundamental cycle may result in a higher order of complexity.

By observing a number of facts about the nature of the edges of $G$ that were not included in $T$, however, all fundamental cycles in $T$ can be efficiently computed. The main idea is that a fundamental cycle in $T$ will consist of edges that were the contours (outlining edges) of a collection of consecutive faces in the original graph $G$. Since $T$ is a forest, all the faces previously defined in $G$ will be `open' in $T$, missing some edges. By adding one of the removed edges back into $T$, a closed face in $T$ will be defined, whose contour will be that of a specific collection of consecutive faces of $G$. Thus, the fundamental cycle can be computed by going through the open faces in $T$ via their missing edges, completing the collection into the closed face and identifying the edges that make up its contour.

Let the \emph{stripping number} $s(f)$ of a face $f$ in $G$ be equal to the number of stripping steps $i$ taken by the algorithm  (Step \ref{step_III}), required for that face to become an outer face in the remaining graph $G_{k-i}$. The stripping number of each face will be used in the computation of the fundamental cycles. Note that the stripping number of a face differs from its layer (as mentioned in the proof of Lemma \ref{thm:lem82_degree}), since for the computation of a stripping number only edges on the outer face are removed, while face layering is computed by removing vertices and all their adjacent edges.

\begin{fact}\label{thm:fact_removed_seps_strip}
 The two adjacent faces in $G$ of every edge $e\in E-F$ (not in $T$) are of different stripping number.
\end{fact}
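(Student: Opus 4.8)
The plan is to restate the stripping number as a distance in a face-adjacency graph and then prove the contrapositive. Let $D$ be the graph whose nodes are the faces of $G$, with two faces joined when they share an edge; since each stripping step of Step \ref{step_III} absorbs into the outer region exactly those faces that share an edge with the current outer face, the stripping number $s(f)$ is precisely the distance in $D$ from $f$ to the exterior face $O$. In $G_{k-i}$ the outer face is then the union of all faces with $s\le i$, and the surviving edges separate $\{s\le i\}$ from $\{s>i\}$. The two faces $f_1,f_2$ of $G$ bordering an edge $e$ are adjacent in $D$, so their distances to $O$ differ by at most one; hence $|s(f_1)-s(f_2)|\le 1$, and it suffices to exclude the case $s(f_1)=s(f_2)$. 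I would therefore prove the contrapositive: if $s(f_1)=s(f_2)$, then $e\in F$.

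Next I would locate the stripping step at which $e$ is removed under the assumption $s(f_1)=s(f_2)=m$. For every $i<m$ both faces satisfy $s>i$, so both are still interior in $G_{k-i}$ and $e$ lies strictly inside it, untouched. At step $m$ the two faces merge simultaneously into the outer face of $G_{k-m}$ through their \emph{other} boundary edges, namely those shared with faces of stripping number $m-1$ (such neighbors exist because $f_1,f_2$ are at distance $m$ in $D$). Consequently $e$ survives step $m$ but, in $G_{k-m}$, now has the enlarged outer face on both of its sides. An edge bounded by the same face on both sides lies on no cycle, i.e.\ it is a bridge; so when $e$ is finally stripped at step $m+1$ it enters the corresponding removed-edge graph as a \emph{branch} edge rather than a wheel edge, in the sense of Fact \ref{thm:fact_branch_wheel}.

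To conclude I would invoke the building stage of Step \ref{step_III}: branches are acyclic and each branch edge is the unique link between its endpoints and the remainder of the graph, so no maximal spanning forest can omit it; every branch edge is added to $T$. Thus $e\in F$, which establishes the contrapositive. Combined with $|s(f_1)-s(f_2)|\le 1$, this shows that every $e\in E-F$ separates two faces whose stripping numbers differ by exactly one, as claimed.

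The delicate point --- and the step I expect to be the main obstacle --- is the middle paragraph's claim that equal stripping numbers force $e$ to be a bridge at the moment of its removal. This requires justifying both that $e$ persists until step $m+1$ (neither incident face exposes $e$ any earlier) and the equivalence between ``the outer face lies on both sides of $e$'' and ``$e$ is a non-cycle, hence branch, edge'' in the wheel/branch decomposition. I would also treat the base case $m=0$ separately, where one of $f_1,f_2$ is the exterior face of $G$ and the two stripping numbers are unequal for trivial reasons.
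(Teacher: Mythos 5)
Your proposal is correct and follows essentially the same route as the paper: the paper also argues by contraposition that if both adjacent faces had the same stripping number they would join the outer face at the same stripping step, making $e$ the only path between its endpoints at that point (a bridge, i.e.\ a branch rather than a wheel edge), so maximality of the spanning forest forces $e\in F$. Your additional observations (stripping number as distance in the face-adjacency graph, and that the difference is exactly one) are correct but go beyond what this Fact asserts; the latter is the content of Fact \ref{thm:fact_nomorestrip_1}.
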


Intuitively, every missing edge in $T$ was a separator of faces of different stripping number in $G$. The claim can be seen to be correct by noticing that the algorithm's scheme for the addition of edges, employed in the building stage of Step \ref{step_III}, will only consider for non-inclusion those edges that participate in a wheel and thus, are separators of faces of different stripping number in $G$. It does not rely on the scheme chosen however, since for any missing edge $(v,w)$, its adjacent faces could not have been of the same stripping number, as that would imply they became outer faces during the same stripping step in some $G_{k-i}$. If that were the case, no other path could exist between $v$ and $w$ in $G_{k-i}$, except for this edge, which means that the edge itself would have been included in $T_{k-i}$.

\begin{fact}\label{thm:fact_removed_forbidded_seq}
 There can be no missing edges between a sequence of faces $f,f',f''$, with stripping numbers $s(f')=s(f)+1$ and $s(f'')=s(f)$.
\end{fact}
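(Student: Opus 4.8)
The plan is to argue by contradiction, assuming a sequence of faces $f,f',f''$ of $G$ in which $f$ and $f'$ share an edge $e_1$, $f'$ and $f''$ share an edge $e_2$, the stripping numbers satisfy $s(f)=s(f'')=t-1$ and $s(f')=t$, and \emph{both} $e_1$ and $e_2$ are missing edges (belong to $E-F$). The first step is to localise these two edges in time. By Fact~\ref{thm:fact_removed_seps_strip} each of $e_1,e_2$ separates two faces of consecutive stripping numbers $t-1$ and $t$, and by the very argument used there, an edge separating levels $t-1$ and $t$ lies on the outer face of the graph present immediately before the $t$-th stripping step and is removed exactly at that step (equivalently, it is placed in the same removed-edge graph $R$). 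Hence $e_1$ and $e_2$ are removed simultaneously, namely at the step that exposes the level-$t$ face $f'$.

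The second step is to read off the local planar picture at that moment. Writing $G'$ for the remaining graph just before the $t$-th stripping step, the faces $f$ and $f''$ have already been absorbed into the single connected outer face of $G'$, whereas $f'$ is still an interior face, bounded by a closed walk $C$ that contains both $e_1$ and $e_2$, each of which lies on the outer face of $G'$. Removing the outer-face edges of $C$ at this step is precisely what merges $f'$ with the outer region. I would then exploit the hypothesis that $f$ and $f''$ sit at the \emph{same} level $t-1$ on opposite sides of $f'$: since nothing of level $\ge t$ intervenes between the two exposure points $e_1$ and $e_2$, I would argue that these two edges lie on one and the same cycle of the outer boundary of $G'$ surrounding $f'$ (a \emph{wheel}, in the terminology of Step~\ref{step_III}), with no branch or enclosed tree between them that would force an extra deletion.

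The conclusion would then follow from maximality of the spanning forest $T$. Because $e_1$ and $e_2$ lie on a common cycle with no intervening deeper structure, once $e_1$ is absent the cycle $C$ is already broken, so $e_2$ could be re-inserted into $T$ without creating any cycle; this contradicts the maximality enforced in the building stage. Equivalently, deleting both edges would detach the vertices enclosed by $C$ from the rest of $G'$, contradicting that $T$ is a spanning forest (one tree per connected component). Either way at most one of $e_1,e_2$ can be missing, which is the claim.

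I expect the main obstacle to be the middle step: turning the hypothesis $s(f)=s(f'')=t-1$, $s(f')=t$ into the precise planar statement that $e_1$ and $e_2$ belong to a single exposure cycle around $f'$ with nothing of level $\ge t$ separating them. This is where planarity, the degree-$3$ assumption secured in Step~\ref{step_II}, and the exact semantics of stripping numbers (as opposed to face layers) must be combined carefully; once that structural lemma is in place, the contradiction with the maximality and connectivity of $T$ is immediate.
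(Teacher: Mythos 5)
Your overall plan --- localise $e_1$ and $e_2$ to the single stripping step that exposes $f'$, so that both land in the same removed-edge graph $R_{j-1}$, and then derive a contradiction from the properties of the forest produced at the corresponding building step --- is the same route the paper takes, and your first step (both edges are stripped exactly when $f'$ becomes an outer face, by the argument behind Fact~\ref{thm:fact_removed_seps_strip}) is correct. But both of the ways you propose to close the argument have problems. The maximality route is invalid as stated: in a maximal spanning forest of a connected graph, re-inserting \emph{any} omitted edge creates a cycle, namely its fundamental cycle, which in general is the contour of a whole collection of consecutive faces and not the boundary walk $C$ of $f'$; so observing that $C$ is already broken by the absence of $e_1$ does not show that $T+e_2$ is acyclic, and no contradiction with maximality arises. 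The connectivity route is the right one, but it rests entirely on the structural claim you explicitly defer --- that between $e_1$ and $e_2$ along the boundary of $f'$ there is no branch, no attached tree of $T_{j-1}$, and no other excursion of the wheel that could keep the intervening arc connected after both deletions. Without that claim, the two arcs of $C$ left after deleting $e_1$ and $e_2$ may each remain attached to the rest of the forest through $\beta$-vertices, and nothing is detached.

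The paper closes exactly this gap with a timing observation that your proposal does not use: because $s(f)=s(f'')<s(f')$, the remaining boundary edges of $f$ and of $f''$ (those not shared with $f'$) are removed at an \emph{earlier} stripping step and are therefore only added back at a \emph{later} building step. Hence, at the moment the edges of $R_{j-1}$ are being selected, the vertices on $\partial f\cap\partial f'$ and on $\partial f''\cap\partial f'$ have no connection to the rest of the partial forest other than these boundary arcs themselves, which forces the arcs to be retained up to a single omission and rules out omitting one edge on each side of $f'$. You should either prove the structural lemma you deferred or replace it with this ordering argument; as it stands, the proposal correctly identifies the main obstacle but does not overcome it.
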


Alternatively, while going through faces via their removed edges, there can be no sequence of faces of alternating stripping numbers. As $f$ and $f''$ have a lower stripping number, edges on their other boundaries (not with $f'$) will be removed first, leaving the only possible way of connecting the vertices on their boundaries with $f'$ with the rest of the graph, to be the boundaries themselves.

\begin{fact}\label{thm:fact_nomorestrip_1}
 No edge can exist in $G$, whose two adjacent faces differ in stripping number by more than one.
\end{fact}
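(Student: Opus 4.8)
The plan is to argue directly from the definition of the stripping number by following a single edge through the stripping stage of Step \ref{step_III}. Let $e$ be an arbitrary edge and let $f,f'$ be the two faces of $G$ that it separates; without loss of generality set $a=s(f)\le s(f')=b$, so the task reduces to proving $b\le a+1$. If $e$ is a bridge (so $f=f'$), or more generally if $a=b$, the difference is $0$ and there is nothing to show, so I would assume $a<b$.

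First I would record the two elementary dynamics of stripping. An edge is deleted during a stripping step exactly when it lies on the outer face at the start of that step; and an edge lies on the outer face of a remaining graph $G_{k-j}$ exactly when one of the two faces it borders in $G_{k-j}$ is the unbounded face. Throughout the stripping process the two faces bordering $e$ (as long as $e$ survives) are the faces obtained from $f$ and $f'$ by the mergers that earlier deletions have caused, and by definition these become the unbounded face precisely at steps $a$ and $b$. Since $a\le b$, the edge $e$ first lies on the outer face in $G_{k-a}$ and not earlier; hence $e$ is still present in $G_{k-a}$ and is deleted in the next step, the one producing $G_{k-a-1}$.

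Next I would examine that deletion. In $G_{k-a}$ the face on the $f$-side of $e$ is already the unbounded face, whereas the face on the $f'$-side is still bounded because $b>a$. Removing $e$ opens this bounded region directly onto the unbounded face, so the descendant of $f'$ becomes part of the outer face of $G_{k-a-1}$. By the definition of the stripping number this means $f'$ has become an outer face after $a+1$ steps, that is $s(f')\le a+1$, which gives $b\le a+1$ as required.

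The step I expect to be the main obstacle is the bookkeeping of faces under repeated edge deletion: one must justify carefully that the two faces bordering the surviving edge $e$ in each intermediate graph are exactly the merged descendants of the original faces $f$ and $f'$, and that reading ``$f$ becomes an outer face at step $a$'' as ``the $f$-side descendant is the unbounded face of $G_{k-a}$'' is legitimate. Once this correspondence between original faces and their descendants is made precise, the conclusion follows in a couple of lines. It is also worth noting that this fact is consistent with, and sharpens, Fact \ref{thm:fact_removed_seps_strip}: for an edge not in $T$ the two adjacent faces already have \emph{different} stripping numbers, so combined with the present fact their stripping numbers must differ by exactly one.
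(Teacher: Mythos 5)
Your argument is correct, and it is essentially the paper's approach carried out in full: the paper disposes of this fact with the single remark that it is ``trivially true by definition,'' and your write-up simply unpacks that definition --- the edge $e$ is deleted at exactly the stripping step after the first of its two faces merges into the outer face, and that deletion immediately makes the other face outer. The face-descendant bookkeeping you flag is handled adequately (edge deletions only merge faces, never split them), so no gap remains.
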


\begin{fact}\label{thm:fact_atmoststrip_1}
 For any pair of adjacent faces in $G$, at most one edge between them can be missing in $T$.
\end{fact}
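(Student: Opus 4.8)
The plan is to argue by contradiction using the fundamental-cycle structure of the maximal spanning forest $T$, rather than the stripping numbers directly. Suppose two distinct edges $e_1,e_2\in E-F$ both have the pair of adjacent faces $f,f'$ as their two incident faces. Since $T$ is a \emph{maximal} spanning forest and $e_1\notin F$, the endpoints of $e_1$ already lie in the same tree of $T$, so the fundamental cycle $C_1$ associated with $e_1$ (Definition \ref{def:vr_er}) is a genuine simple cycle whose only non-tree edge is $e_1$ itself; every other edge of $C_1$ belongs to $F$.

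First I would use planarity to locate $f$ and $f'$ relative to $C_1$. Being a simple closed curve in the plane, $C_1$ bounds an interior region $D_1$. Because the edge $e_1$ lies on $C_1$, while $f$ and $f'$ are precisely the two faces incident to $e_1$, exactly one of $f,f'$ must lie inside $D_1$ and the other outside: the two local sides of $e_1$ are the inside and the outside of $C_1$, and neither face can meet $C_1$ since faces are open and disjoint from the edges. Say $f'$ lies inside $D_1$ and $f$ lies in the exterior.

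The key step is then to force $e_2$ onto $C_1$. Its two incident faces are again $f$ (exterior to $D_1$) and $f'$ (interior to $D_1$), so $e_2$ is an edge having one side inside $D_1$ and the other side outside; any such edge must lie on the topological boundary of $D_1$, which is exactly $C_1$. Hence $e_2\in C_1$. But $e_2\neq e_1$ and $e_2\notin F$, contradicting the fact that $e_1$ is the \emph{only} non-tree edge of $C_1$. This contradiction shows that at most one edge between $f$ and $f'$ can be missing from $T$.

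I expect the only delicate points to be the two topological claims — that $f$ and $f'$ sit on opposite sides of $C_1$, and that an edge separating an interior face from an exterior face must lie on the bounding cycle — both of which are instances of the Jordan curve theorem applied within the fixed planar embedding; everything else is bookkeeping. As a consistency check and an alternative route, one can instead invoke Facts \ref{thm:fact_removed_seps_strip} and \ref{thm:fact_nomorestrip_1}, which force $s(f)$ and $s(f')$ to differ by exactly one, so that $e_1$ and $e_2$ are stripped at the same step as consecutive edges of the boundary wheel of $f'$; the degree-$3$ bound then prevents a second omission along that wheel arc without either isolating a vertex or leaving a cycle intact. The fundamental-cycle argument above, however, is shorter and avoids this case analysis on the wheel structure.
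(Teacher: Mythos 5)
Your argument is correct: the fundamental cycle $C_1$ of $e_1$ is a simple closed curve separating $f$ from $f'$, so a second missing edge $e_2$ between the same two faces would have to lie on $C_1$, contradicting the fact that $e_1$ is the unique non-tree edge of its fundamental cycle. This is essentially the paper's approach — the paper merely asserts that Fact \ref{thm:fact_atmoststrip_1} ``holds due to the maximality of $T$'', and your Jordan-curve/fundamental-cycle argument is a correct and complete elaboration of exactly that appeal to maximality, so no further comparison is needed.
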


These two are the last needed facts about the stripping numbers of faces in $G$ and missing edges in $T$. The first claim is trivially true by definition, while the second holds due to the maximality of $T$. 

Facts \ref{thm:fact_removed_seps_strip}, \ref{thm:fact_removed_forbidded_seq}, \ref{thm:fact_nomorestrip_1} and \ref{thm:fact_atmoststrip_1} imply that a tree can be defined based on the hierarchy of stripping numbers of the faces of $G$, with its vertices being the faces of $G$ and edges signifying whether an edge exists that is adjacent to both faces and that is not included in $T$. In particular, the \emph{open-face tree} is constructed as follows. First, take a vertex for each face in $G$. Also take a vertex for the outer face and fix it as a root. Then, for each edge in $G$ but not in $T$, add an edge between the two faces that are adjacent to it in $G$. This construction is indeed a tree, by the facts presented above: Fact \ref{thm:fact_removed_seps_strip} implies that there exists no edge in the open-face tree between vertices of the same level. Fact \ref{thm:fact_removed_forbidded_seq} ensures that a vertex can be connected to at most one vertex of the previous level.
Finally, Facts \ref{thm:fact_nomorestrip_1} and \ref{thm:fact_atmoststrip_1} mean that all edges connect vertices of subsequent levels and that multiple edges do not exist between any two vertices, respectively.

By the construction of the open-face tree, the fundamental cycle of every edge can be computed by the addition of the non-common boundaries of all faces belonging to a subtree, rooted at the lowest face adjacent to the edge in question. The leaves of the open-face tree are the innermost faces of $G$. Starting from the leaves and traversing the tree upwards to the root, adding one endpoint of every missing edge that is encountered to the corresponding set $X_i$ of every vertex, or edge that belong to the faces examined (and also in $T$), will eventually yield the result.  

\paragraph{Step \ref{step_V}: Shrinking expanded vertices.} This step is something of an inverse to Step \ref{step_II}, as it modifies the tree decomposition $\mathcal{T}_G$ computed in Step \ref{step_IV}, based on the expansion of vertices during Step \ref{step_II}. The process is quite simple: for any vertex $v$ that was expanded to a path $v_1,\dots,v_{d-2}$ of $d-2$ vertices, simply replacing all occurrences of any $v_j$ vertex in any set $X_i$ of $\mathcal{T}_G$, by an occurrence of $v$, yields a correct tree decomposition for the original graph, of smaller or equal width.

\paragraph{Complexity of the algorithm.} Since the analysis of each step of the algorithm has been completed, its computational complexity is now discussed. The overall complexity of the algorithm in both time and space is $O(kn)$ and each step will be shown to be confined within this bound.

During Step \ref{step_I} the algorithm goes through the vertices of the graph twice, first to deconstruct the graph and then to recreate it, while also completing the tree decomposition. It is easily verified that each of these procedures requires at most $O(n)$ time, while the tree decomposition itself requires $O(n)$ space.

The computation of the layer of each face during Step \ref{step_II} can be executed in $O(k)$ time and the construction involved in the expansion of a vertex can be implemented in linear time, as at most all edges of the graph will be examined, the number of which is $O(n)$.

Step \ref{step_III} consists of two stages, both operating in $k$ steps. During the stripping stage, all edges on the outer face are removed, which requires at most $O(n)$ time. If the use of an adjacency list is assumed, extended with extra fields signifying the specific subgraphs a vertex partakes in, $O(kn)$ space suffices. During the building stage, several algorithmic ideas are used. First, the computation of the number of connected components that each graph consists of can be  (famously) implemented to require linear time. Also, the distinction between the parts of a subgraph $R_j$, as being branches or wheels, and the subsequent characterization of vertices as $\alpha,\beta$, or $\gamma$ can both be done by simple traversals of the subgraph and therefore require at most $O(n)$ time. Additional space for these characterizations is constant. Finally, the selection of edges based on this characterization requires at most $O(n)$ time as well,
since simple traversals of the branches, wheels and trees of $T_{j-1}$ are sufficient.

The construction of the tree decomposition in Step \ref{step_IV} involves the simple  addition of vertices and (endpoints of) edges to their corresponding sets, which doesn't require more than $O(n)$ time. The computation of the fundamental cycles, as described above, is a bit more complicated. First, the computation of the stripping number of each face needs $O(k)$ time and could have actually been implemented during Step \ref{step_III}. The open-face tree can be constructed in $O(n)$ time and stored in $O(n)$ space, as the number of faces of a planar graph and the number of edges are both linear on the number of vertices. Since all vertices and edges in $G$ belong to a constant number of faces each, and also to at most $O(k)$ fundamental cycles, using the tree to obtain the correct additions to the sets of the tree decomposition requires $O(kn)$ time.
Finally, the size of the computed tree decomposition is $O(kn)$, consisting of $O(n)$ nodes, of size at most $O(k)$ each.

The last step of the algorithm involves the traversal of the tree decomposition and the replacement of any occurrences of added vertices, by the vertex that was expanded for their construction. This procedure does not require more than $O(kn)$ time.

\section{Computing a branch decomposition}
This section discusses matters relating to branch decompositions, starting with a theorem stating that the treewidth and branchwidth of any graph are always within a constant factor \cite{Robertson1991153}. The proof is omitted here.

\begin{thm}\label{thm:tree_branch_width}
 Let $G=(V,E)$ be a graph with treewidth $t$ and branchwidth $b$, $E\not=\emptyset$. Then it is $\max(b,2)\le t+1\le \max(\lfloor\frac{3}{2}\cdot b\rfloor,2)$.
\end{thm}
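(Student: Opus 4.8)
The plan is to establish the two inequalities $b\le t+1$ and $t+1\le\lfloor\tfrac{3}{2}b\rfloor$ separately, and then dispose of the $\max(\cdot,2)$ terms as degenerate cases. First note that since $E\neq\emptyset$ the graph contains an edge, so $t\ge 1$ and hence $2\le t+1$; combined with $b\le t+1$ this yields the left inequality $\max(b,2)\le t+1$. For the right inequality the substantive case is $b\ge 2$, where I aim to prove $t+1\le\lfloor\tfrac{3}{2}b\rfloor$; when $b\le 1$ one checks directly that $t\le 1$, so $t+1\le 2=\max(\lfloor\tfrac{3}{2}b\rfloor,2)$.

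For $b\le t+1$ I would start from a tree decomposition $(\{X_i\},T)$ of width $t$ and build a branch decomposition of width at most $t+1$. For each edge $e=(u,w)\in E$ choose a node $i(e)$ with $\{u,w\}\subseteq X_{i(e)}$ and attach a new pendant leaf $\ell_e$ at $i(e)$, setting $\sigma(e)=\ell_e$. The resulting tree $T_1$ already carries a bijection from $E$ to its leaves, but its internal nodes need not have degree three, so the next move is to clean it up: suppress every degree-two node by contracting an incident edge (which leaves all middle sets unchanged), and expand every node of degree $d>3$ into a small binary tree of degree-three nodes. The heart of this direction is the claim that every middle set so produced lies inside a single bag. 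For an edge of $T_1$ inherited from $T$, say between nodes $i$ and $j$, any $v\in\omega(f)$ has incident $G$-edges whose leaves lie on both sides of $f$; applying the interpolation (third) axiom of Definition \ref{def:treewidth} to the nodes $i(e)$ hosting those leaves gives $v\in X_i\cap X_j$. The same interpolation argument shows that any edge introduced while expanding a node $i$ has middle set contained in $X_i$, while a pendant edge to $\ell_e$ has middle set $\subseteq\{u,w\}$. Hence every order is at most $t+1$.

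For $t+1\le\lfloor\tfrac{3}{2}b\rfloor$ with $b\ge2$ I would start from a branch decomposition $(T,\sigma)$ of width $b$ and place a tree decomposition on the same tree $T$. To each internal (degree-three) node $i$ with incident edges $f_1,f_2,f_3$ I assign the bag $X_i=\omega(f_1)\cup\omega(f_2)\cup\omega(f_3)$, and to each leaf $\sigma(e)$ with $e=(u,w)$ the bag $\{u,w\}$. The size bound follows from a counting argument: the edges $f_1,f_2,f_3$ split the leaves of $T$ into three groups, and any $v\in X_i$ has incident $G$-edges in at least two of these groups, hence belongs to at least two of $\omega(f_1),\omega(f_2),\omega(f_3)$; therefore $2|X_i|\le|\omega(f_1)|+|\omega(f_2)|+|\omega(f_3)|\le 3b$, giving $|X_i|\le\lfloor\tfrac{3}{2}b\rfloor$ and width at most $\lfloor\tfrac{3}{2}b\rfloor-1$. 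It then remains to verify this is a genuine tree decomposition: every edge is covered by its leaf bag, and the nodes containing a fixed vertex $v$ form a connected subtree because the internal nodes containing $v$ are exactly those lying on paths between leaves $\sigma(e)$ with $v\in e$, whose union is the minimal subtree of $T$ spanning $\{\sigma(e):v\in e\}$.

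The main obstacle is the first direction: attaching the edge-leaves is immediate, but converting $T_1$ into a legitimate degree-one-or-three tree while certifying that neither the suppression of degree-two nodes nor the expansion of high-degree nodes ever creates a middle set larger than $t+1$ requires the careful, uniform use of the interpolation property sketched above. The counting argument in the second direction is short, so there the only delicate points are checking the connectivity axiom for the constructed bags and reconciling the leaf bags and any isolated vertices with the construction.
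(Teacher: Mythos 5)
The paper does not actually prove this theorem --- it is quoted from Robertson and Seymour \cite{Robertson1991153} with the remark that the proof is omitted --- so there is no in-paper argument to compare against. Your proposal reconstructs the standard argument from that reference and is essentially sound: the pendant-leaf construction with the interpolation axiom gives $b\le t+1$, and the bag $X_i=\omega(f_1)\cup\omega(f_2)\cup\omega(f_3)$ with the ``each vertex lies in at least two of the three middle sets'' double-counting gives $2|X_i|\le 3b$; the connectivity check via minimal Steiner subtrees of $\{\sigma(e):v\in e\}$ is the right one. Two loose ends are worth tightening. First, in the $b\le t+1$ direction you must also delete any degree-one nodes of $T_1$ that received no pendant edge-leaf (original leaves of $T$ hosting no edge), since Definition \ref{def:branchwidth} requires $\sigma$ to be a bijection onto the full set of leaves; this is the analogue of Step \ref{Step_d} of the paper's branch-decomposition algorithm and costs nothing, but it is not covered by ``suppress degree-two nodes and expand high-degree nodes.'' Second, your degenerate case $b\le 1\Rightarrow t\le 1$ deserves one line of justification: a cycle has branchwidth $2$ and branchwidth is minor-monotone (Theorem \ref{thm:branch_minors}), so $b\le 1$ forces $G$ to be a forest; and isolated vertices in the $t+1\le\lfloor\frac{3}{2}b\rfloor$ direction should each get their own singleton bag attached anywhere rather than being inserted into existing bags. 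With those points made explicit the argument is complete.
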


By this theorem, one would expect similar results, as those presented in the previous sections for tree decompositions of $k$-outerplanar graphs, to hold for branch decompositions as well. This section presents the analysis of an algorithm for the computation of a branch decomposition of a $k$-outerplanar graph, of width at most $2k+1$ (as introduced in \cite{hjortas2005}).

In contrast to the previous section, here the algorithm will be presented first, followed by the justification of its correctness. As mentioned in the introduction, this algorithm also uses some of the same notions introduced in \cite{bodlaender1988planar} and mentioned in the previous sections. In particular, the algorithm shares the same first three steps to obtain the maximal spanning forest of the given graph, but instead uses it to compute a branch decomposition of bounded width.

\paragraph{Overview of the algorithm.} This algorithm is considered to follow Step \ref{step_III} of the algorithm of the previous section and thus, the assumed input is a $k$-outerplanar graph $G=(V,E)$ of maximum degree equal to 3 and a maximal spanning forest $T=(V,F)$ of $G$, with $vr(G,T)\le 3k-1$ and $er(G,T)\le 2k$. The branch decomposition $(T_b=(V_b,F_b),\sigma)$ is computed in the following steps, after initializing $T_b$ to be the same as $T$.

\begin{enumerate}[a.]
 \item\label{Step_a} For all edges $e=(u,v)\in F$, where both $u$ and $v$ have degree at least 2 in $G$, add a new vertex $w$ to $T_b$ and replace $e$ by edges $e_1=(u,w)$ and $e_2(w,v)$. Finally, add a new vertex $x$ and another edge $e_3=(w,x)$. Set $\sigma(e)=x$.
 \item\label{Step_b} For all edges $e=(u,v)\in F$, where $u$ has degree equal to 1 in $G$, set $\sigma(e)=u$.
 \item\label{Step_c} For all edges $e=(u,v)\in E-F$, add a new vertex $x$ and a new edge $e'=(u,x)$ to $T_b$. Set $\sigma(e)=x$.
 \item\label{Step_d} Remove any vertex $v$ of degree equal to 1 from $T_b$, if there exists no edge $e\in E$, such that $\sigma(e)=v$. 
 \item\label{Step_e} Remove any vertex of degree equal to 2 from $T_b$ and add an edge connecting its two neighbors.
\end{enumerate}

It can be easily verified that the output of the above procedure is a branch decomposition $(T_b,\sigma)$ of the given graph $G$. None of the above steps introduces any cycles in $T_b$, while the degree of any vertex after their application is either 1, or 3. Finally, as defined in the description of each step, $\sigma$ is a bijection from $E$ to the set of leaf nodes in $T_b$.

Note that the output of this algorithm is a branch decomposition of bounded width for the input graph $G$, which has a maximum degree equal to 3, since all vertices of higher degree were expanded (as previously described). The modification of this branch decomposition to one of bounded width for the original graph is straightforward and is described below, in the proof of Theorem \ref{thm:branch_minors}.

\paragraph{Complexity of the algorithm.}  As mentioned in the introduction, the above procedure requires linear time. In particular, Steps \ref{Step_a}.\ and \ref{Step_b}.\ perform a constant amount of computation for every edge of $F$, while Step \ref{Step_c}.\ deals with all other edges in $E$. Finally, the last two Steps only remove unnecessary vertices. Since the number of edges is $O(n)$, the procedure will not require more than $O(n)$ time.

\paragraph{Justification of correctness.} The above algorithm is based on similar results as those discussed in the previous sections. In particular, a series of lemmas and two theorems suffice for the purpose of showing the correctness of the algorithm, all results due to \cite{hjortas2005}, except for Theorem \ref{thm:branch_minors} \cite{Robertson1991153}. First, a technical fact used in a following proof.

\begin{lem}\label{thm:lem_hjort_17}
 Given a branch decomposition $(T_b,\sigma)$, constructed by the above algorithm, and three edges $f,g,h\in T$, the following statement holds. If $g$ is on the path from $f$ to $h$ in $T$, then the neighbor of $\sigma(g)$ is on the path from the neighbor of $\sigma(f)$ to the neighbor of $\sigma(h)$ in $T_b$. 
\end{lem}
\begin{proof}
 First, after Steps \ref{Step_a}.\ and \ref{Step_b}.\ the above statement is obviously true. Step \ref{Step_c}.\ refers to edges that are not in $T$ and Step \ref{Step_d}.\ only removes leaves that represent no edge. Finally, Step \ref{Step_e}.\ only removes vertices of degree 2, meaning the above statement remains true throughout the procedure.
\end{proof}

The following lemma analyzes the maximum order of single edges.

\begin{lem}\label{thm:lem_hjort_18}
 Let $(T_b,\sigma)$ be a branch decomposition of a $k$-outerplanar graph $G$ obtained by the above algorithm, given some maximal spanning forest $T$ of $G$. For any edge $e$ in $T$, it is $|\omega(e_i)|\le er_e(G,T)+1$, for $i=1,2$, where $er_e(G,T)$ denotes the edge remember number of the particular edge $e$, instead of the maximum over all edges.
\end{lem}
\begin{proof}
 As an intuitive tool, note that removal of an edge $e_i$ from $T_b$ would partition the tree in two disconnected subtrees. This defines two disjoint sets of edges $r,l$ in $G$, or alternatively two colors. An edge $a$ is assigned to one of these sets (colored by one of the colors), depending on which partition of the tree $T_b$ its corresponding $\sigma(a)$ belongs in. The order of edge $e_i$ is thus the number of vertices in $G$ with edges of both colors.
 
 Now, for $i$ being equal to either 1, or 2, let $f$ and $g$ be edges of $T$ such that $g$ is on the path from $f$ to $e_i$ in $T$. In case no two such edges exist, the claim obviously holds. By Lemma \ref{thm:lem_hjort_17}, $\sigma(g)$ belongs to the same partition of the tree as $\sigma(f)$, which means $f$ and $g$ must have been assigned to the same set (color) in $G$. This means that all edges of $T$ on one side of $e_i$ are assigned to one set in $G$ and all edges of $T$ on the other side are assigned to the other set.
 
 For any edge $e$ in $T$ (to which both the $e_i$ in $T_b$ correspond), there is in $G$ a subgraph $R$ of edges (also appearing in $T$) that are colored $r$ on one side of $e$, and on the other side a subgraph $L$ of edges (appearing in $T$ as well) colored $l$. There is also a set $C_e$ of edges of both colors in between $R$ and $L$. Let $(x,y)$ be such and edge of $C_e$. Then the path from $x$ to $y$ in $T$ must contain $e$, since it is the only edge between $R$ and $L$ that also appears in $T$. This means that this set $C_e$ is exactly the set of edges not in $T$, whose fundamental cycle uses $e$, with the edge $e$ itself also in $C_e$.
 
 The middle set $\omega(e_i)$ is the set of vertices with edges of both colors, meaning edges from $C_e$. Since each such edge has one color, it can only contribute one vertex to $\omega(e_i)$, which gives $|\omega(e_i)|\le er_e(G,T)+1$, for $i=1,2$.
\end{proof}

Subsequently, Lemma \ref{thm:lem_hjort_19} looks into the width of the branch decomposition as a whole.

\begin{lem}\label{thm:lem_hjort_19}
 Let $(T_b,\sigma)$ be a branch decomposition of a $k$-outerplanar graph $G$ obtained by the above algorithm, given some maximal spanning forest $T$ of $G$. Then the width of $(T_b,\sigma)$ is at most $\max\{2,er(G,T)+1\}$.
\end{lem}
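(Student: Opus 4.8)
The plan is to reduce the global width bound to the per-edge bound already established in Lemma \ref{thm:lem_hjort_18}, by inspecting every type of edge that appears in the final tree $T_b$ and checking that no edge carries order exceeding $\max\{2,er(G,T)+1\}$. Recall that $T_b$ is built from $T$ through Steps \ref{Step_a}.--\ref{Step_e}., so its edges fall into a small number of categories: the two halves $e_1,e_2$ of a subdivided forest-edge (from Step \ref{Step_a}.), the pendant edges $e_3$ leading to leaves $\sigma(e)$ for $e\in F$, the pendant edges $e'$ leading to leaves $\sigma(e)$ for $e\in E-F$ (from Step \ref{Step_c}.), and whatever edges survive the cleanup Steps \ref{Step_d}.\ and \ref{Step_e}. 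The strategy is therefore a case analysis over these categories, bounding the middle set in each.

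First I would dispatch the pendant edges. Any edge incident to a leaf $\sigma(e)$ separates that single leaf from the rest of $T_b$; by Definition \ref{def:branchwidth} its middle set consists only of the endpoints of the single graph-edge $e$ that the leaf represents, so its order is at most $2$. This handles the edges $e_3$ and $e'$, and also the case in Step \ref{Step_b}.\ where a degree-one vertex directly becomes a leaf. Thus every pendant edge contributes order at most $2$, which is absorbed by the $\max\{2,\cdot\}$ term in the bound.

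Next I would treat the internal edges, which are exactly the $e_1,e_2$ halves of subdivided forest-edges (possibly further merged by the degree-two suppression of Step \ref{Step_e}.). Here Lemma \ref{thm:lem_hjort_18} does the heavy lifting: for the two halves $e_1,e_2$ of a forest-edge $e$, it gives $|\omega(e_i)|\le er_e(G,T)+1$ for $i=1,2$, where $er_e(G,T)$ is the edge remember number of the specific edge $e$. Taking the maximum over all such $e$ replaces $er_e$ by $er(G,T)$, yielding $|\omega(e_i)|\le er(G,T)+1$ for every internal edge. I would also note, via Lemma \ref{thm:lem_hjort_17}, that the suppression of degree-two vertices in Step \ref{Step_e}.\ does not enlarge any middle set, since merging two consecutive edges whose halves separate the same colored partition of $T$ yields an edge with the same associated set $C_e$; hence the cleanup steps preserve the bound rather than threaten it.

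The main obstacle I anticipate is precisely the bookkeeping around Steps \ref{Step_d}.\ and \ref{Step_e}.: one must argue that suppressing degree-two nodes and deleting empty leaves cannot create a newly merged edge whose middle set exceeds both the order of the two edges it replaces and the uniform bound. This is where Lemma \ref{thm:lem_hjort_17} is essential, guaranteeing that the path structure in $T$ is faithfully mirrored in $T_b$ so that the two-coloring argument of Lemma \ref{thm:lem_hjort_18} still applies to any merged edge. Once that is established, combining the pendant case (order $\le 2$) with the internal case (order $\le er(G,T)+1$) gives that every edge of $T_b$ has order at most $\max\{2,er(G,T)+1\}$, and taking the maximum over all edges of $T_b$ yields the claimed width, completing the proof.
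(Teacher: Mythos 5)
Your proof is correct and follows essentially the same route as the paper's: invoke Lemma \ref{thm:lem_hjort_18} for the edges of $T_b$ derived from forest edges, and observe that every pendant edge incident to a leaf $\sigma(e)$ has middle set contained in the endpoints of $e$, hence order at most $2$. Your extra care about the cleanup Steps \ref{Step_d}.\ and \ref{Step_e}.\ (noting that suppressing an unlabeled degree-two node merges two edges with identical leaf-partitions, hence identical middle sets) is a welcome explicit justification of a point the paper passes over silently, but it is not a different argument.
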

\begin{proof}
 By Lemma \ref{thm:lem_hjort_18}, the order of edges of $T_b$ that correspond to an edge also in $T$ is at most $er(G,T)+1$. What remains is the order of edges not appearing in $T$. For all such edges $e=(u,v)$, a new vertex $x$ and edge $e'=(u,x)$ are added to $T_b$ and $\sigma(e)=x$ is set, in Step \ref{Step_c}.\ of the algorithm. The order of edge $e'$ is the number of vertices that have adjacent edges in $G$, the path between which in $T_b$ will use $e'$. But this number is always 2, as the only such vertices are exactly the endpoints of $e$, which implies the result.
 
 Alternatively, using the intuitive visualization of colors described in the proof of Lemma \ref{thm:lem_hjort_18}, edge $e'$ is the only edge of leaf $x$ in $T_b$ and removal of this edge would assign the edge $\sigma^{-1}(x)=e$ one color and a different one to all other edges of $G$. This implies at most two vertices having incident edges of different colors, them being the endpoints of $e$.
\end{proof}

As mentioned above, the proof of the following theorem describes the required modification of the algorithm's computed branch decomposition, to one of bounded width for the original graph, before any vertices were expanded for the maximum degree to equal 3.

\begin{thm}\label{thm:branch_minors}
 If $G$ is a minor of $H$, then $branchwidth(G)\le branchwidth(H)$.
\end{thm}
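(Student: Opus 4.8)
The plan is to mirror the structure of Lemma \ref{thm:lem16_tw_minor}: since $G$ is a minor of $H$, it can be obtained from $H$ by a finite sequence of vertex deletions, edge deletions, and edge contractions, so it suffices to show that none of these three elementary operations increases the branchwidth. The statement for a general minor then follows by applying the operations one at a time. Throughout, I would fix an optimal branch decomposition $(T_H,\sigma_H)$ of $H$ and transform it into a branch decomposition of the reduced graph without increasing the maximum order of any edge.

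Deletions are the easy cases. For an edge deletion $G=H-e$, I would delete the leaf $\sigma_H(e)$ from $T_H$ and suppress the resulting degree-two node, reconnecting its two surviving neighbours by a single edge, so that the new tree again has all degrees in $\{1,3\}$ and $\sigma_G$ is the restriction of $\sigma_H$. Since $E(G)\subset E(H)$, every middle set of the new decomposition is contained in the corresponding middle set of $(T_H,\sigma_H)$: fewer edges mean fewer pairs of adjacent edges whose connecting path can cross a given tree-edge, so no vertex is ever added to any $\omega(\cdot)$ and the width cannot grow. A vertex deletion is just the deletion of all incident edges followed by discarding an isolated vertex, and an isolated vertex never contributes to a branch decomposition, so this case reduces to edge deletions.

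The contraction case is where the real work lies. Suppose $G$ is obtained by contracting $e=(v,w)$ into a new vertex $x$. I would again remove the leaf $p=\sigma_H(e)$ and suppress its former (now degree-two) neighbour, letting $\sigma_G$ send every surviving edge to the leaf it occupied under $\sigma_H$; each edge of $G$ incident to $x$ then corresponds to an edge of $H$ incident to $v$ or $w$. The claim to establish is that each edge $f'$ of the transformed tree has a ``parent'' edge $f$ of $T_H$ with $|\omega_G(f')|\le|\omega_H(f)|$. For a vertex $u\ne x$ this is immediate: its incident edges are untouched by the contraction, so $u\in\omega_G(f')$ exactly when $u\in\omega_H(f)$. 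The delicate point is the merged vertex $x$, for which I must show that $x\in\omega_G(f')$ forces $v\in\omega_H(f)$ or $w\in\omega_H(f)$; this makes the two potential contributions of $v$ and $w$ collapse into the single contribution of $x$, so cardinality can only drop by the merge.

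The main obstacle, and the step I would treat most carefully, is exactly this last implication, because a priori $x$ could straddle $f'$ by combining an edge of $v$ on one side with an edge of $w$ on the other, while neither $v$ nor $w$ straddles $f$ on its own. The way I would close this gap is to use the contracted edge itself as a witness: the leaf $p=\sigma_H(e)$ lies on one definite side of $f$ in $T_H$, and $e$ is incident to both $v$ and $w$; hence whichever of $v,w$ supplies the edge landing on the side of $f'$ opposite to $p$ already straddles $f$ in $H$, its witnessing pair being that edge together with $e$. Thus at least one of $v,w$ lies in $\omega_H(f)$ whenever $x\in\omega_G(f')$. Applying the same reasoning to the single suppressed tree-edge, comparing it against either of the two tree-edges incident to the suppressed node (whose two sides agree with those of the merged edge up to the removed leaf $p$), completes the verification. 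Together with the trivial base cases where $|E(G)|\le1$, this yields $branchwidth(G)\le branchwidth(H)$.
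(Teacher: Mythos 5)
Your proposal is correct, and at its core it takes the same route as the paper: start from an optimal branch decomposition of $H$, discard the leaves of edges not surviving in $G$, suppress the resulting degree-two nodes, and restrict $\sigma$. The difference is one of granularity and rigor. The paper does this in a single shot -- it takes the minimal subtree $S$ of $T_b$ spanning the representatives of the edges of $G$, suppresses degree-two vertices, and simply asserts that the result has ``lower or equal width''; it never addresses why an edge \emph{contraction} cannot enlarge a middle set. You instead decompose the minor relation into elementary operations and verify each one, and your treatment of the contraction case is exactly the point the paper glosses over: when $x\in\omega_G(f')$ is witnessed by one edge inherited from $v$ and one from $w$, you use the leaf $\sigma_H(e)$ of the contracted edge as the extra witness showing that whichever of $v,w$ owns the edge on the side opposite $\sigma_H(e)$ already lies in $\omega_H(f)$, so the merge of $v$ and $w$ into $x$ can only decrease the count. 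Your handling of the suppressed tree-edge (comparing its bipartition to that of either incident edge of the suppressed node) is also correct. The stepwise argument is longer but is actually a complete proof, whereas the paper's version is closer to a proof sketch; the paper's one-shot formulation buys brevity and makes the restriction-to-a-subtree construction explicit, which is what its algorithmic modification in Section 5 actually uses.
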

\begin{proof}
 In the case where $G$ has only one or zero edges, by definition its branchwidth is equal to zero. Suppose $G$ has at least two edges and let $(T_b,\sigma)$ be a branch decomposition of $H$, of lowest width. Let also $S$ be a minimal subtree of $T_b$, such that every edge of $G$ has a representative in $S$, and $T_b'$ be obtained by removing all vertices of degree equal to 2 from $S$ and replacing them with edges connecting their former neighbors. If $\sigma'$ is the restriction of $\sigma$ to the edges of $G$, then $(T_b',\sigma')$ is a branch decomposition of $G$, of lower or equal width.
\end{proof}

Finally, the following theorem summarizes the results presented in this section, while also providing the upper bound to the width of the branch decomposition computed by the above algorithm.

\begin{thm}\label{thm:hjort_25}
 Let $G$ be a $k$-outerplanar graph. Then $G$ has branchwidth at most $2k+1$.
\end{thm}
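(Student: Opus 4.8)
The plan is to mirror the structure of the proof of Theorem \ref{thm:thm83_tw_upbound}, chaining together the degree-reduction lemma, the bound on the edge remember number, the width bound for the algorithm's output, and the minor-monotonicity of branchwidth. First I would invoke Lemma \ref{thm:lem82_degree} to obtain a $k$-outerplanar graph $H$ with $G$ a minor of $H$ and $degree(H)\le3$; this is precisely the hypothesis required by the algorithm, which was assumed to receive a graph of maximum degree $3$.

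Next I would apply Lemma \ref{thm:lem81_step_kouterplanar_maxspan} to $H$ to produce a maximal spanning forest $T$ of $H$ with $er(H,T)\le2k$ (the accompanying bound $vr(H,T)\le3k-1$ is not needed here, though the shared first three algorithmic steps compute it anyway). Running the branch-decomposition algorithm on $H$ and $T$ yields a branch decomposition $(T_b,\sigma)$ of $H$, whose width, by Lemma \ref{thm:lem_hjort_19}, is at most $\max\{2,er(H,T)+1\}\le\max\{2,2k+1\}$. Since $2k+1\ge3>2$ for every $k\ge1$, this maximum equals $2k+1$, so $branchwidth(H)\le2k+1$.

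Finally, since $G$ is a minor of $H$, Theorem \ref{thm:branch_minors} gives $branchwidth(G)\le branchwidth(H)\le2k+1$, which completes the argument. I expect no genuine obstacle here: all the substantive work has been carried out in the preceding lemmas, and this theorem is essentially an assembly step identical in spirit to the treewidth bound. The only point demanding a moment's care is the degenerate case in which $G$ has at most one edge, where branchwidth is zero by definition and the bound holds trivially; this case is already absorbed inside the proof of Theorem \ref{thm:branch_minors}, so the chain of inequalities remains valid throughout.
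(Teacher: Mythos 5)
Your proposal is correct and follows essentially the same route as the paper's own proof: Lemma \ref{thm:lem82_degree} to reduce to maximum degree $3$, Lemma \ref{thm:lem81_step_kouterplanar_maxspan} for $er(H,T)\le 2k$, Lemma \ref{thm:lem_hjort_19} for the width of the algorithm's output, and Theorem \ref{thm:branch_minors} to transfer the bound back to $G$. The only cosmetic differences are that the paper assumes $G$ connected without loss of generality rather than singling out the at-most-one-edge case, and additionally cites Lemma \ref{thm:lem80_base_outerplanar_maxspan} for the base case of the spanning-forest bound.
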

\begin{proof}
 Without loss of generality, $G$ is assumed to be connected. Lemma \ref{thm:lem82_degree} states there exists a $k$-outerplanar graph $H$ of maximum degree equal to 3, such that $G$ is a minor of $H$. Lemmas \ref{thm:lem80_base_outerplanar_maxspan} and \ref{thm:lem81_step_kouterplanar_maxspan} state there exists a maximal spanning forest $T$ of $H$, such that $er(H,T)\le 2k$. If $(T_b,\sigma)$ is a branch decomposition of $H$, obtained by the above algorithm based on $T$, by Lemma \ref{thm:lem_hjort_19} the width of $(T_b,\sigma)$ is at most $\max\{2,er(H,T)+1\}$. This gives $branchwidth(H)\le2k+1$. By Theorem \ref{thm:branch_minors}, it is $branchwidth(G)\le2k+1$.
\end{proof}

\section{Conclusion}
This paper discusses $k$-outerplanar graphs, focusing on matters relating to their treewidth and branchwidth. Algorithmic ideas first presented in the proof of a theoretical upper bound of at most $3k-1$ on their treewidth \cite{bodlaender1988planar} are used to describe algorithms for the computation of both tree and branch decompositions of bounded width for any given $k$-outerplanar graph. Both algorithms require linear time on the size of the input graph, more precisely $O(kn)$ time for the case of a tree decomposition of width at most $3k-1$, and $O(n)$ time for a branch decomposition of width at most $2k+1$. Since the minimum $k$ for which a planar graph is $k$-outerplanar can be computed in $O(n^2)$ time  \cite{Kammer:2007:DSK:1778580.1778615}, the above can be naturally extended to all planar graphs in general.

One question that remains open is whether a tree decomposition of bounded width can be computed in $O(n)$ time. Even the simple task of reading such a tree decomposition would require $O(kn)$ time, since that is the order of its size. On the other hand, using the alternative representation of a tree decomposition in the form of an elimination ordering (encoding a chordal graph), one can reduce this size to $O(n)$. Despite being aware of a method requiring $O(n)$ time that obtains an alternative elimination ordering of size $3n$ from which the tree decomposition of bounded width can be extracted, this author does not know whether a conventional elimination ordering can also be obtained.

\bibliography{citations}{}

\providecommand{\bysame}{\leavevmode\hbox to3em{\hrulefill}\thinspace}
\providecommand{\MR}{\relax\ifhmode\unskip\space\fi MR }
\providecommand{\MRhref}[2]{%
  \href{http://www.ams.org/mathscinet-getitem?mr=#1}{#2}
}
\providecommand{\href}[2]{#2}
\begin{thebibliography}{10}

\bibitem{Alber:2000:FPA:645900.672605}
Jochen Alber, Hans~L. Bodlaender, Henning Fernau, and Rolf Niedermeier,
  \emph{Fixed parameter algorithms for planar dominating set and related
  problems}, Proceedings of the 7th Scandinavian Workshop on Algorithm Theory,
  SWAT '00, Lecture Notes in Computer Science, volume 1851, 2000, pp.~97--110.

\bibitem{Alber200426}
Jochen Alber, Henning Fernau, and Rolf Niedermeier, \emph{Parameterized
  complexity: exponential speed-up for planar graph problems}, Journal of
  Algorithms \textbf{52} (2004), no.~1, 26 -- 56.

\bibitem{Arnborg:1987:CFE:37170.37183}
Stefan Arnborg, Derek~G. Corneil, and Andrzej Proskurowski, \emph{Complexity of
  finding embeddings in a k-tree}, SIAM J. Algebraic Discrete Methods
  \textbf{8} (1987), no.~2, 277--284.

\bibitem{Baker:1994:AAN:174644.174650}
Brenda~S. Baker, \emph{Approximation algorithms for {NP}-complete problems on
  planar graphs}, J. ACM \textbf{41} (1994), no.~1, 153--180.

\bibitem{bodlaender1988planar}
Hans~L. Bodlaender, \emph{Planar graphs with bounded treewidth}, Technical
  Report RUU-CS-88-14, Department of Computer Science, Utrecht University, the
  Netherlands (1988).

\bibitem{bodlaender1994tourist}
Hans~L. Bodlaender, \emph{A tourist guide through treewidth}, Acta Cybern.
  \textbf{11} (1993), no.~1-2, 1--22.

\bibitem{bodlaender1993linear}
Hans~L. Bodlaender, \emph{A {L}inear-{T}ime {A}lgorithm for {F}inding
  {T}ree-{D}ecompositions of {S}mall {T}reewidth}, SIAM Journal on Computing
  \textbf{25} (1996), no.~6, 1305--1317.

\bibitem{bodlaender1997treewidth}
Hans~L. Bodlaender, \emph{Treewidth: Algorithmic techniques and results},
  Mathematical Foundations of Computer Science 1997, Lecture Notes in Computer
  Science, vol. 1295, Springer Berlin Heidelberg, 1997, pp.~19--36.

\bibitem{Bodlaender98apartial}
Hans~L. Bodlaender, \emph{A partial k-arboretum of graphs with bounded
  treewidth}, Theoretical Computer Science \textbf{209} (1998), no.~1-2, 1--45.

\bibitem{Chen200120}
Zhi-Zhong Chen, \emph{Approximation algorithms for independent sets in map
  graphs}, Journal of Algorithms \textbf{41} (2001), no.~1, 20 -- 40.

\bibitem{hjortas2005}
Olav Hjortas, \emph{Branch decompositions of k-outerplanar graphs}, Master's
  thesis, University of Bergen, Department of Informatics, 2005.

\bibitem{Kammer:2007:DSK:1778580.1778615}
Frank Kammer, \emph{Determining the smallest k such that {G} is k-outerplanar},
  Proceedings of the 15th Annual European Symposium on Algorithms, ESA, Lecture
  Notes in Computer Science, vol. 4698, Springer, 2007, pp.~359--370.

\bibitem{KammerTholey2009}
Frank Kammer and Torsten Tholey, \emph{A lower bound for the treewidth of
  k-outerplanar graphs}, Tech. Report 2009-07, Universität Augsburg Technical
  Report, Institute of Computer Science, University of Augsburg, 2009.

\bibitem{kammer:approximate}
Frank Kammer and Torsten Tholey, \emph{Approximate tree decompositions of
  planar graphs in linear time}, Proceedings of the ACM-SIAM Symposium on
  Discrete Algorithms, SODA, 2012, pp.~683--698.

\bibitem{Lick1970}
Don~R. Lick and Arthur~T. White, \emph{k-degenerate graphs}, Canadian Journal
  of Mathematics (1970), no.~22, 1082--1096.

\bibitem{Robertson198449}
Neil Robertson and Paul~D. Seymour, \emph{Graph minors {III}. {P}lanar
  tree-width}, Journal of Combinatorial Theory, Series B \textbf{36} (1984),
  no.~1, 49 -- 64.

\bibitem{Robertson198692}
Neil Robertson and Paul~D. Seymour, \emph{Graph minors {V}. {E}xcluding a
  planar graph}, Journal of Combinatorial Theory, Series B \textbf{41} (1986),
  no.~1, 92 -- 114.

\bibitem{Robertson1991153}
Neil Robertson and Paul~D. Seymour, \emph{Graph minors {X}. {O}bstructions to
  tree-decomposition}, Journal of Combinatorial Theory, Series B \textbf{52}
  (1991), no.~2, 153 -- 190.

\bibitem{Seymour1994_rat}
Paul~D. Seymour and Robin Thomas, \emph{Call routing and the ratcatcher},
  Combinatorica \textbf{14} (1994), 217--241.

\end{thebibliography}
\bibliographystyle{amsplain-nodash}
\nocite{*}

\end{document}